\newcommand{\TBR}{{\rm TBR}}
\newcommand{\MP}{{\rm MP}}
\newtheorem{observation}[theorem]{Observation}
\newcommand{\blue}[1]{\textcolor{black}{#1}}
\newcommand{\steven}[1]{\textcolor{black}{#1}}
\definecolor{applegreen}{rgb}{0.55, 0.71, 0.0}
\newcommand{\polish}[1]{\textcolor{black}{#1}}
\newcommand{\purple}{\textcolor{black}}
\title{A tight kernel for computing the tree bisection and reconnection distance between two phylogenetic trees }
\author{Steven Kelk\inst{1}, Simone Linz\inst{2}}
\institute{Department of Data Science and Knowledge Engineering (DKE),\\ Maastricht University, The Netherlands,\\ \email{steven.kelk@maastrichtuniversity.nl}
\and Department of Computer Science, University of Auckland, New Zealand,\\
\email{s.linz@auckland.ac.nz}}
\providecommand{\keywords}[1]{\textit{Keywords:} #1}
\begin{document}
\maketitle

\begin{abstract}
In 2001 Allen and Steel showed that, if subtree and chain reduction rules have been applied to two
unrooted phylogenetic trees, the \steven{reduced} trees will have at most $28k$ taxa where $k$ is the TBR (Tree Bisection and Reconnection)
distance between the two trees. Here we reanalyse Allen and Steel's kernelization algorithm and prove that the
reduced instances will in fact have at most $15k-9$ taxa. Moreover we show, by describing a family of instances which have exactly $15k-9$ taxa after reduction, that this new bound is tight. These instances also have no common clusters, showing that a third commonly-encountered reduction rule, the cluster reduction, cannot further reduce the size of the kernel in the worst case. To achieve these results we introduce and use ``unrooted generators'' which are analogues of rooted structures
that have appeared earlier in the phylogenetic \purple{networks} literature. Using similar
argumentation we show that, for the \polish{minimum hybridization problem} on two rooted trees, $9k-2$ is a tight bound (when subtree and chain reduction rules have been applied) and $9k-4$ is a tight bound (when, additionally, the cluster
reduction has been applied) on the number of taxa, where $k$ is the \polish{hybridization number} of the two trees. 
\end{abstract}

\keywords{fixed-parameter tractability, tree bisection and reconnection, generator, kernelization, phylogenetic network, phylogenetic tree, \polish{hybridization number}.}

\section{Introduction}

In the study of evolution \emph{phylogenetic trees} are often used to depict the evolution of a set of species \polish{(or more abstractly \emph{taxa})} $X$. These are trees in the usual graph-theoretical sense where the leaves are bijectively labelled by $X$ and the internal \purple{vertices} represent hypothetical (common) ancestors of $X$ \cite{steel2016phylogeny}. Phylogenetic trees are typically constructed from genetic markers, such as DNA alignments, and under many objective functions the goal of constructing the ``best'' phylogenetic tree is NP-hard \cite{money2012characterizing}. This has stimulated the development of heuristics which explore the space of phylogenetic trees using topological rearrangement moves \cite{felsenstein2004inferring,john2017shape}. One popular such move is the \emph{Tree Bisection and Reconnection} (TBR) move which, informally, deletes some edge of the tree and then re-attaches the two induced components by a newly introduced edge. In attempting to understand the connectivity of phylogenetic tree space under the action of TBR moves, it is natural to ask the following: what is the smallest number of TBR moves required to transform one tree $T$ into another tree $T'$? This is known as the  \blue{TBR {\it distance}} of the two trees, denoted $d_\TBR(T,T')$. It is NP-hard to compute \cite{AllenSteel2001,hein1996complexity}. In 2001 Allen and Steel showed the following kernelization result: if \emph{common pendant subtrees} and \emph{common chains} in the two trees are repeatedly collapsed, then this preserves $d_\TBR$, and upon termination the reduced trees will contain at most $28 \cdot d_\TBR(T,T')$ taxa \cite{AllenSteel2001}. This result was used to prove that the computation of $d_\TBR$ is fixed-parameter tractable (FPT); see \cite{Cygan:2015:PA:2815661} for an introduction to FPT.

Here we strengthen the bound given by Allen and Steel. We show that the reduced trees obtained from their kernelization algorithm in fact have at most $15 \cdot d_\TBR - 9$ taxa, and that this result is ``best possible'': for every $k\geq 2$, we demonstrate two trees with TBR distance $k$ such that, after the kernelization procedure has terminated, they have \emph{exactly} $15 \cdot d_\TBR - 9$ taxa. This proves that, if smaller kernels are to be obtained, additional reduction rules will be required. One natural candidate for a third reduction rule is the \emph{common cluster} reduction rule \cite{bordewich2017fixed}. However, using a slightly modified construction, we show that this rule (when applied in addition to the subtree and chain reduction rules) cannot improve upon the $15 \cdot d_\TBR - 9$ bound. 

A novel feature of our proofs is that they leverage recent insights from the phylogenetic \emph{networks} literature, where networks are essentially the generalization of phylogenetic trees to graphs \cite{HusonRuppScornavacca10}. Specifically, it was recently shown that if one embeds two trees $T$ and $T'$ into an unrooted phylogenetic network $N =(V,E)$, then the minimum value of $|E|-(|V|-1)$ ranging over all such $N$ will be equal to $d_\TBR(T,T')$ \cite{van2018unrooted}. This is significant because it attaches a static, graph-based interpretation to \blue{the} TBR distance: it allows us to view its computation as a graph/network-construction problem. In turn, this allows us to define and use unrooted analogues of \emph{generators} (i.e. backbone topologies) \cite{kelk2014constructing,lev2TCBB} which have been used extensively \purple{in the FPT literature on rooted phylogenetic networks} (see e.g. \cite{van2016hybridization} and references therein). Once viewed this way, the strengthened $15 \cdot d_\TBR - 9$ bound can be derived via a fairly straightforward counting argument. The generators also turn out to be invaluable in proving the tightness of the bound. 

As a spin-off to these results we show that the earlier-identified upper bound of $9k - 2$ \cite{approximationHN} on the size of the standard \polish{hybridization number} \steven{weighted} kernel \cite{sempbordfpt2007} \polish{for rooted trees} is also tight, and that in this case the cluster reduction can only improve the bound slightly, to $9k-4$, which as we demonstrate is also tight.

In the final part of the article we devote a discussion section to summarizing the (new) state of the algorithmic landsdcape for TBR distance and reflect upon the broader consequences of our strengthened bound on the size of the TBR kernel. We also list a number of \polish{related} phylogenetics problems where there is still quite some potential for improving bounds on kernel sizes.

\section{Preliminaries}\label{sec:prelim}

\noindent{\bf Unrooted phylogenetic trees and networks.} Throughout this paper $X$ denotes a finite set \polish{(of \emph{taxa})} with at least two elements. An {\it unrooted binary phylogenetic network} on $X$  is a  simple, connected, and undirected graph $N$ with $|X|$ vertices, called {\it leaves}, of degree one and bijectively labeled with $X$, and all other vertices of degree 3.  We define the {\it reticulation number} of $N$ as $r(N) = |E|-(|V|-1)$, where $E$ and $V$ are the edge and vertex sets of $N$, respectively. If $r(N)=0$, then $N$ is called an {\it unrooted binary phylogenetic tree} on $X$.  \\

\noindent{\bf Subtrees, chains, and clusters.} Let $N$ be an unrooted binary phylogenetic network on $X$. A {\it pendant subtree} of $N$ is an unrooted binary phylogenetic tree on a proper subset of $X$ that can be obtained from $N$ by deleting a single edge. For $n\geq 1$, let $C=(\ell_1,\ell_2,\ldots,\ell_n)$ be a sequence of distinct leaves in $X$ and, for each $i\in\{1,2,\ldots,n\}$, let $p_i$ denote the unique neighbor of $\ell_i$ in $N$. We call $C$ an $n$-chain of $N$ if there exists a path $p_1,p_2,\ldots,p_n$ in $N$ such that $p_2, \ldots, p_{n-1}$ is a simple path. That is, we optionally allow that $p_1 = p_2$ (i.e. $\ell_1$ and $\ell_2$ have a common parent) and/or $p_{n-1} = p_{n}$ (i.e. $\ell_{n-1}$ and $\ell_{n}$ have a common parent). Furthermore, $n$ is referred to as the {\it length} of $C$. By definition, note that each element in $X$ is a chain of length 1 in $N$. Lastly, for $Y\subset X$, we say that $Y$ is a {\it cluster} of $N$ if there exists a single edge in $N$ whose deletion disconnects $N$ into two parts such that the leaves of one part are bijectively labeled by elements in $Y$ while the leaves of the other part are bijectively labeled by elements in $X-Y$. If $|Y|$=1, then the cluster is called {\it trivial} \blue{and, otherwise, it is called {\it non-trivial}}. Note that, if $Y$ is a cluster of $N$, then $X-Y$ is also a cluster of $N$. \polish{We say that $Y, X-Y$ is a {\it bipartition} of $N$.} \\

\noindent{\bf Generators.} Rooted generators have played an important role in establishing kernelization results for problems on rooted trees~\cite{approximationHN,vanIersel20161075,ierselLinz2013}, but have only received very little attention~\cite{GBP2012} as a tool to tackle problems on unrooted trees. Here we give a definition of an unrooted generator that we will subsequently use to establish an improved kernel for the problem of computing the TBR distance (formally defined below) between two trees. Let $k$ be a positive integer.
For $k\geq 2$, a {\it $k$-generator} (or short {\it generator} when $k$ is clear from the context) is a connected cubic multigraph with edge set $E$ and vertex set $V$ such that $k=|E|-(|V|-1)$. Furthermore, for $k=1$, we define the graph that consists of a single vertex $u$ and a loop edge $\{u,u\}$ to be the unique {\it $1$-generator}. The edges of a generator are also called its {\it sides}. Intuitively, the sides are the places where leaves can be attached in order to obtain an unrooted binary phylogenetic network from a generator. We now formalize this concept. Let $G$ be a $k$-generator, let $\{u,v\}$ be a side of $G$, and let $Y$ be a set of leaves. The operation of subdividing $\{u,v\}$ with $|Y|$ new vertices and, for each such new vertex $w$, adding a new edge $\{w,\ell\}$, where $\ell\in Y$ such that $Y$ bijectively labels the new leaves is referred to as {\it attaching} $Y$  to  $\{u,v\}$. \blue{Additionally, if $G$ is the 1-generator, then the degree-2 vertex $u$ is suppressed after attaching $Y$ to $\{u,u\}$. Moreover,} if \purple{at least two} new leaves are attached to $G$ in a way that at least one new leaf is attached to each loop and to each pair of parallel edges, then the resulting graph is an unrooted binary phylogenetic network $N$ with $r(N)=k$. Note that $N$ has no pendant subtree with more than a single leaf. Conversely, we obtain $G$ from $N$ by deleting all leaves and, repeatedly, suppressing any resulting  degree-2 vertices. We say that $G$ {\it underlies} $N$. In summary, we make the following observation.

\begin{observation}\label{o:generator}
Let $N$ be an unrooted binary phylogenetic network with  \polish{$r(N)=k\geq 2$} and with no pendant subtree of size at least two.
Then the graph $G$ that is obtained from $N$ by  deleting all leaves and, repeatedly, suppressing any resulting  degree-2 vertices is a $k$-generator. Conversely, we obtain $N$ from $G$ by attaching to each side $s=\{u,v\}$ of $G$ a (possibly empty) set of leaves.
\end{observation}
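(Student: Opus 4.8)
The plan is to verify the two halves of the observation separately, built around the single invariant that neither deleting a degree-$1$ vertex together with its incident edge, nor suppressing a degree-$2$ vertex, changes the quantity $|E|-(|V|-1)$: the former removes one vertex and one edge, the latter removes one vertex and, on balance, one edge. I would first note that the construction of $G$ from $N$ is well-defined: deleting all leaves is a single finite step, and afterwards only finitely many degree-$2$ suppressions are possible since each decreases $|V|$; moreover, suppressing a degree-$2$ vertex leaves the degree of every surviving vertex unchanged, so the vertices that eventually get suppressed are exactly those of degree $2$ in $N$ minus its leaves, regardless of the order of suppression. Hence $G$ is precisely the graph obtained from $N$ by deleting the leaves and smoothing out every resulting degree-$2$ vertex, and it is connected because $N$ is connected, deleting degree-$1$ vertices cannot disconnect a graph, and suppression preserves connectivity.

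Next I would establish that $G$ is cubic. Every non-leaf vertex of $N$ has degree $3$, and by hypothesis $N$ has no pendant subtree of size at least two; in particular no non-leaf vertex of $N$ can be adjacent to two leaves $\ell_1,\ell_2$, since deleting the third edge at such a vertex would exhibit the cherry on $\{\ell_1,\ell_2\}$ as a pendant subtree of size two. Therefore, after deleting all leaves, every remaining vertex has degree $2$ or $3$; the degree-$2$ ones are suppressed and the degree-$3$ ones retain degree $3$, so every vertex of $G$ has degree exactly $3$. Applying the invariant to the whole construction gives $|E(G)|-(|V(G)|-1)=|E(N)|-(|V(N)|-1)=r(N)=k\geq 2$, so $G$ is a connected cubic multigraph with the prescribed value of $|E|-(|V|-1)$, i.e.\ a $k$-generator. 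I would also remark that the pendant-subtree hypothesis is genuinely needed at this step: a pendant cherry in $N$ would leave a degree-$1$ vertex in $G$.

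For the converse I would simply reverse the construction side by side. Each side $s$ of $G$ is an edge of the smoothed cubic multigraph, so it corresponds to a unique maximal path in $N$ minus its leaves whose internal vertices all have degree $2$ there (a closed such path through a single vertex when the two endpoints of $s$ coincide, which forces that path to have at least two internal vertices since $N$, being simple, has no loops or parallel edges). By the previous paragraph each internal vertex of that path is, in $N$, a degree-$3$ vertex with exactly one leaf neighbour; let $Y_s$ collect those leaf neighbours. Attaching $Y_s$ to $s$ — subdividing $s$ by $|Y_s|$ new vertices placed in the order in which the corresponding original internal vertices occur along the path, and hanging the appropriate leaf off each — restores that path together with its pendant leaves. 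Performing this simultaneously for every side of $G$, and observing that every non-leaf vertex of $N$ is either a vertex of $G$ or an internal vertex of exactly one such path (again using that it has at most one leaf neighbour), reconstructs $N$ exactly.

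I do not expect a serious obstacle here: this is a bookkeeping observation, and its substantive content — connectivity, cubicity, and preservation of $|E|-(|V|-1)$ — follows from the one-line invariant plus the pendant-subtree hypothesis. The only points that need a little care are making the correspondence between sides of $G$ and maximal degree-$2$ paths of $N$ minus its leaves precise in the presence of loops and parallel edges, and noting that in the converse the leaves of $Y_s$ must be attached along $s$ in the order dictated by $N$ (the attaching operation as defined does not itself fix this order, but $N$ does).
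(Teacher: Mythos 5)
Your proof is correct and follows the same route the paper takes: the paper states this as an Observation justified only by the informal discussion immediately preceding it, which is precisely your correspondence between leaf-deletion plus degree-2 suppression on one side and the attaching operation on the other, together with the invariance of $|E|-(|V|-1)$ and the fact that the no-pendant-subtree hypothesis forbids a non-leaf vertex from having two leaf neighbours. Your write-up just makes explicit the bookkeeping (order-independence of suppression, connectivity, and the ordering of leaves along a side) that the paper leaves implicit.
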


\noindent{\bf Tree bisection and reconnection.} Let $T$ be an unrooted binary phylogenetic tree on X. Apply the following three-step operation to $T$:
\begin{enumerate}
\item Delete an edge in $T$ and suppress any resulting degree-2 vertex so that two new \steven{unrooted} binary phylogenetic trees, say $T_1$ and $T_2$, are obtained.
\item If $T_1$ (resp. $T_2$) has at least one edge, subdivide an edge in $T_1$ (resp. $T_2$) with a new vertex $v_1$ (resp. $v_2$) and otherwise set $v_1$ (resp. $v_2$) to be the single isolated vertex of $T_1$ (resp. $T_2$).
\item Add a new edge \polish{$\{v_1,v_2\}$} to obtain a new unrooted binary phylogenetic tree $T'$ on $X$.
\end{enumerate}
We say that $T'$ has been obtained from $T$ by a single  {\it tree bisection and reconnection} (TBR) operation. Furthermore, we define the TBR {\it distance}  between two  unrooted binary phylogenetic trees $T$ and $T'$ on $X$,  denoted by $d_\TBR(T,T')$, to be the minimum number of TBR operations that is required to transform $T$ into $T'$. It is well known that, for any such pair of trees, one can always obtain one from the other by a sequence of TBR operations. In particular, $d_\TBR$ is a metric~\cite{AllenSteel2001}. By building on an earlier result by Hein et al.~\cite[Theorem 8]{hein1996complexity}, Allen and Steel~\cite{AllenSteel2001} established NP-hardness of computing the TBR distance.\\

\noindent{\bf \blue{Unrooted} minimum hybridization.} In ~\cite{van2018unrooted}, it was shown that computing the TBR distance for a pair of unrooted binary phylogenetic trees $T$ and $T'$ is equivalent to a problem that is concerned with computing the minimum number of \polish{extra edges} required to simultaneously explain $T$ and $T'$. To  describe this problem precisely, let $N$ be an unrooted binary phylogenetic network on $X$, and let $T$ be an unrooted binary phylogenetic tree on $X$. We say that $N$ {\it displays} $T$, if  $T$ can be obtained from a subtree of $N$ by suppressing degree-2 vertices. Furthermore, for two unrooted binary phylogenetic trees $T$ and $T'$ on $X$, we set
$$h^u(T, T') = \min_ N\{r(N)\},$$ where the minimum is taken over all unrooted binary phylogenetic networks  on  $X$ that display $T$ and $T'$. \polish{The value $h^u(T, T')$ is known as the \emph{hybridization number} of $T$ and $T'$ \cite{van2018unrooted}.} \\

\noindent{\sc Unrooted-Hybridization-Number (UHN)}\\
\noindent{\bf Input.} Two unrooted binary phylogenetic trees $T$ and $T'$ on $X$.\\
\noindent{\bf Output.} An unrooted binary phylogenetic network $N$ on $X$ that displays $T$ and $T'$ and such that $r(N)=h^u(T,T')$.\\

\noindent We are now in a position to formally state the aforementioned equivalence that was established in~\cite[Theorem 3]{van2018unrooted}.

\begin{theorem}\label{t:tbr-equiv}
Let $T$ and  $T'$ be two unrooted binary phylogenetic trees on $X$. Then $$d_\TBR(T,T')=h^u(T,T').$$
\end{theorem}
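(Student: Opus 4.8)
To prove Theorem~\ref{t:tbr-equiv} I would establish the two inequalities $h^u(T,T')\le d_\TBR(T,T')$ and $d_\TBR(T,T')\le h^u(T,T')$ separately. The device underlying both is the observation that a binary network $N$ on $X$ displays a binary tree $T$ on $X$ exactly by means of a subtree of $N$ that is a subdivision of $T$, and that such a subtree can be completed greedily (never closing a cycle) to a spanning tree of $N$; the only thing this completion adds is pendant parts consisting of unlabelled vertices, which vanish again the moment one passes to the minimal subtree spanning $X$ and suppresses degree-$2$ vertices. Call this last operation \emph{clean-up}. Thus ``$N$ displays $T$'' is equivalent to ``some spanning tree of $N$ cleans up to $T$'', and the whole statement becomes a question about how two spanning trees of a fixed graph are related.

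For $h^u(T,T')\le d_\TBR(T,T')$ I would induct on $k:=d_\TBR(T,T')$ along a shortest TBR sequence $T=\tau_0,\tau_1,\ldots,\tau_k=T'$, constructing networks $N_0,\ldots,N_k$ with $r(N_i)=i$ so that $N_i$ displays $\tau_0,\ldots,\tau_i$; take $N_0=T$. Given $N_i$ with an embedding of $\tau_i$ inside it, the TBR move producing $\tau_{i+1}$ bisects $\tau_i$ along an edge and reconnects the two sides $A$ and $B$ by a fresh edge whose endpoints subdivide an edge $f_A$ of $A$ and an edge $f_B$ of $B$. I would mimic this in the network by subdividing the appropriate edges of the embedding (those lying over $f_A$ and over $f_B$) with new degree-$3$ vertices $w_A,w_B$ and adding the edge $\{w_A,w_B\}$; this raises the reticulation number by precisely one. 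The enlarged $N_{i+1}$ still displays every tree $N_i$ displayed, since deleting $\{w_A,w_B\}$ and suppressing recovers $N_i$; and re-routing the embedding of $\tau_i$ across $\{w_A,w_B\}$ in place of the bisected edge produces an embedding of $\tau_{i+1}$. The case in which the bisection splits off a single leaf is treated by an analogous local gadget (subdivide that leaf's pendant edge and re-attach it through the new vertex), again costing one reticulation. After $k$ steps, $N_k$ displays $T$ and $T'$ with $r(N_k)=k$, so $h^u(T,T')\le k$.

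For the reverse inequality, fix an optimal $N$ with $r(N)=k:=h^u(T,T')$ and, using the first paragraph, spanning trees $S_T,S_{T'}$ of $N$ that clean up to $T$ and $T'$. Since $|E(S_T)|=|E(S_{T'})|=|V(N)|-1$ while $|E(S_T)\cup E(S_{T'})|\le|E(N)|=|V(N)|-1+k$, the two spanning trees differ in at most $k$ edges; by the standard exchange property of spanning trees there is a chain $S_T=A_0,A_1,\ldots,A_d=S_{T'}$ with $d\le k$ in which each $A_{j+1}$ arises from $A_j$ by swapping one edge out and one edge in. Cleaning up turns each $A_j$ into a binary phylogenetic tree $\tau_j$ on $X$ with $\tau_0=T$ and $\tau_d=T'$, and the heart of the matter is the lemma that $\tau_{j+1}$ is obtained from $\tau_j$ by at most one TBR operation. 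Granting it, $d_\TBR(T,T')\le d\le k=h^u(T,T')$, which together with the previous paragraph gives equality.

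That lemma is the step I expect to be the real obstacle. Removing the swapped-out edge from $A_j$ either disturbs only a pendant block of unlabelled vertices — in which case clean-up is unchanged and $\tau_{j+1}=\tau_j$ — or it cuts the minimal $X$-spanning subtree into two pieces $P$ and $Q$, so $\tau_j$ loses an edge and falls apart into $\tau_j|_P$ and $\tau_j|_Q$, and the swapped-in edge then reconnects these pieces. What must be checked carefully is that each endpoint of the swapped-in edge, no matter how deep inside an unlabelled pendant block it sits, projects after clean-up onto a point in the \emph{interior of an edge} of the relevant restricted tree — this is ultimately forced by the fact that every internal vertex of $N$ has degree $3$, so an unlabelled block can only meet the core at a degree-$2$ (hence suppressed) vertex. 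Consequently the reconnection really is a TBR reconnection, with the ``isolated vertex'' variant appearing exactly when one of the pieces is a single leaf. An alternative to this whole line of argument is to go through agreement forests, combining the Allen--Steel identity $d_\TBR(T,T')=m(T,T')-1$~\cite{AllenSteel2001} (with $m$ the least number of components of an agreement forest of $T$ and $T'$) with the two bounds $m(T,T')-1\le h^u(T,T')$ and $h^u(T,T')\le m(T,T')-1$; but extracting a small agreement forest from an embedding in $N$ requires its own careful bookkeeping, so I would attempt the spanning-tree route first.
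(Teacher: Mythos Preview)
The paper does not prove this theorem at all; it is quoted from \cite[Theorem~3]{van2018unrooted} and used as a black box throughout. So there is no in-paper argument to compare your proposal against.

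That said, your strategy is sound. The direction $h^u\le d_\TBR$, by grafting one extra edge into the network for each TBR step along a shortest sequence, is routine and your description is accurate (including the single-leaf corner case). For $d_\TBR\le h^u$, extending the two embeddings to spanning trees of an optimal $N$, bounding $|E(S_T)\setminus E(S_{T'})|\le r(N)$ via $|E(N)|=|V(N)|-1+r(N)$, and walking between them by single-edge matroid exchanges is a clean reduction. You have correctly isolated the only non-trivial step: that one spanning-tree exchange induces at most one TBR move on the cleaned-up $X$-trees. Your degree-$3$ argument is exactly the right mechanism---if the attachment vertex $w$ of an unlabelled pendant block had degree $3$ in the minimal $X$-spanning subtree, then all three edges of $N$ at $w$ would already lie in that subtree, leaving no room for the block or for the swapped-in edge. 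Hence $w$ has degree at most $2$ there and is suppressed, so the reconnection subdivides an edge as required (with the isolated-vertex variant arising precisely when one side has a single taxon). The one sub-case deserving an extra sentence is when the endpoint of the deleted edge coincides with such a $w$; the same degree count handles it.

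The alternative you sketch at the end---combining Allen and Steel's identity $d_\TBR(T,T')=m(T,T')-1$ with a proof that $h^u(T,T')=m(T,T')-1$ via agreement forests---is the route the cited reference takes. That approach outsources the delicate combinatorics to \cite{AllenSteel2001}, whereas your spanning-tree argument is self-contained but must pay for it with the exchange lemma. Either is adequate; yours is arguably more direct if one does not wish to invoke the agreement-forest characterisation of $d_\TBR$.
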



\noindent {\bf Reductions and kernelization.}
While computing the TBR distance is NP-hard, it was also shown in~\cite{AllenSteel2001} that the problem is fixed-parameter tractable when parameterized by $d_\TBR$. For two unrooted binary phylogenetic trees $T$ and $T'$  on $X$, the authors used the following two reductions to kernelize the problem.\\

{\bf Subtree reduction.} Replace a maximal pendant subtree with at least two leaves that is common to $T$ and $T'$ by a single leaf with a new label.\\

{\bf Chain reduction.} Replace a maximal $n$-chain with $n\geq 4$ that is common to $T$ and $T'$ by a 3-chain with three new leaf labels correctly oriented to preserve the direction of the chain. For an illustration of this reduction, see Figure~\ref{fig:chain-red}.\\

\begin{figure}[!ht]
\center
\includegraphics[width=\textwidth]{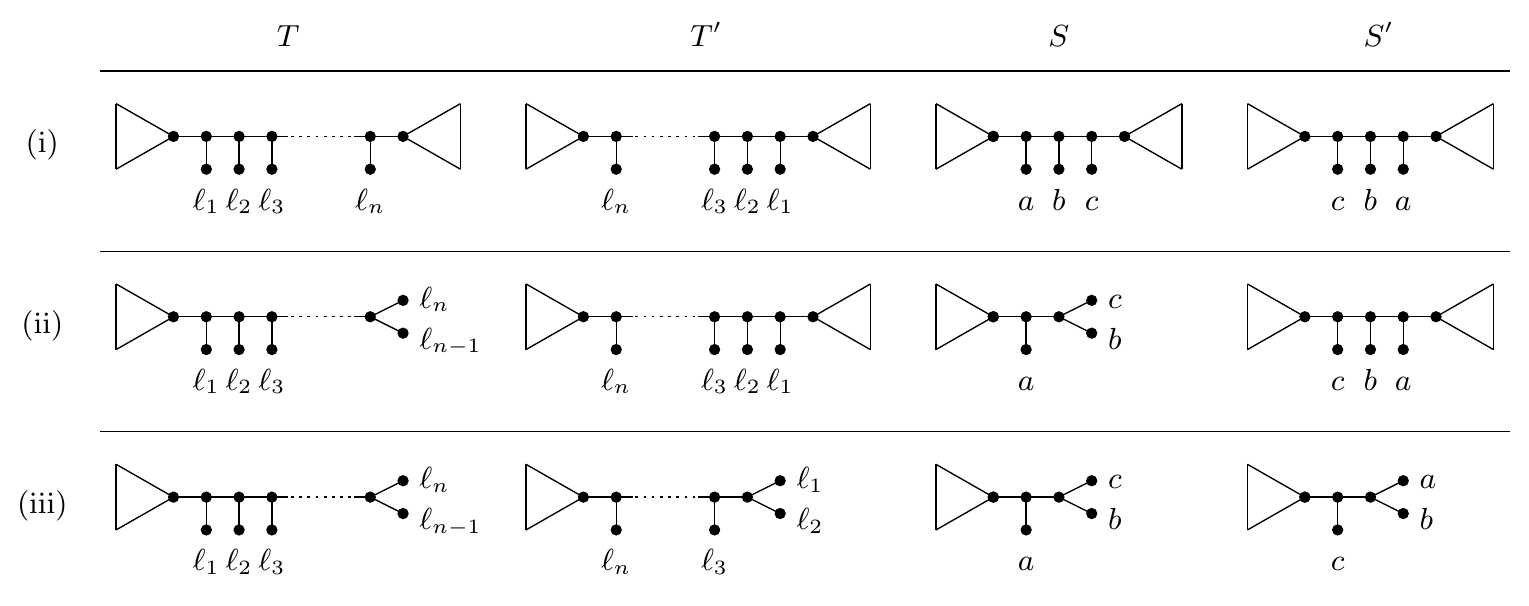}
\caption{A chain reduction applied to the common $n$-chain $C=(\ell_1,\ell_2,\ldots,\ell_n)$ with $n\geq 4$ of two unrooted binary phylogenetic trees $T$ and $T'$. In the reduced trees $S$ and $S'$, $C$ is replaced with the 3-chain $(a,b,c)$. Triangles indicate subtrees of $T$, $T'$, $S$, and $S'$. As shown in (ii) and (iii), $C$ may be pendant in one or both of $T$ and $T'$ and, hence, $\ell_1$ and $\ell_2$ have the same parent and/or $\ell_{n-1}$ and $\ell_{n}$ have the same parent. Note that, if $T$ and $T'$ both have the property that $\ell_1$ and $\ell_2$ have the same parent, then $C$ is in fact a common pendant subtree and reduced under the subtree reduction.}
\label{fig:chain-red}
\end{figure}

\noindent If $T$ and $T'$ cannot be reduced under the subtree (resp. chain) reduction, we say that $T$ and $T'$ are {\it subtree} (resp. {\it chain}) {\it reduced}.\\

The next theorem, which is due to~\cite[Theorem 3.4]{AllenSteel2001}, shows that both reductions are safe, i.e. they do not change the TBR distance.

\begin{theorem}\label{t:safe}
Let $T$ and $T'$ be two  unrooted binary phylogenetic trees on $X$, and let $S$ and $S'$ be two trees obtained from $T'$ and $T'$, respectively, by applying a single subtree or chain reduction. Then $d_\TBR(T,T')=d_\TBR(S,S')$.
\end{theorem}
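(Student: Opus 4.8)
The plan is to work not with TBR sequences directly but with the static characterization of Theorem~\ref{t:tbr-equiv}: it suffices to prove $h^u(T,T')=h^u(S,S')$. Write $S_0$ (with leaf set $L_0$, $|L_0|\ge 2$) for the maximal common pendant subtree in the subtree case and $C=(\ell_1,\dots,\ell_n)$, $n\ge 4$, for the maximal common chain in the chain case, and let $\ell$, resp.\ $(a,b,c)$, denote the new leaf, resp.\ the new $3$-chain, created by the reduction. I would establish the two inequalities separately.

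For $h^u(T,T')\le h^u(S,S')$ I would take an optimal network $N'$ for the reduced pair $(S,S')$ and \emph{expand} the collapsed structure inside it: replace the leaf $\ell$ by a glued-in copy of $S_0$ (subdividing the edge of $S_0$ along which it was detached from $T$), or replace the three pendant leaves $a,b,c$ --- which sit along a path of $N'$ --- by $\ell_1,\dots,\ell_n$, in the same order, along a subdivision of that path. In each case the numbers of added vertices and edges are equal, so the reticulation number is unchanged, and any subtree of $N'$ witnessing that $N'$ displays $S$ (resp.\ $S'$) extends to one witnessing that the expanded network displays $T$ (resp.\ $T'$) by routing the expanded structure through the inserted copy; hence $h^u(T,T')\le r(N')=h^u(S,S')$. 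This direction is routine bookkeeping.

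For $h^u(S,S')\le h^u(T,T')$ I would fix an optimal network $N$ for $(T,T')$ with subtrees $U_T,U_{T'}$ of $N$ witnessing that $N$ displays $T$ and $T'$, and reduce everything to a \emph{normalization claim}: some optimal $N$ can be chosen in which the common structure occurs intact and pendant --- in the subtree case, $N$ contains a pendant subtree on $L_0$ isomorphic to $S_0$ that lies inside both $U_T$ and $U_{T'}$; in the chain case, $\ell_1,\dots,\ell_n$ hang in order along a single path of $N$ that is traversed identically by $U_T$ and $U_{T'}$. Granting this, I would finish by contracting that pendant copy of $S_0$ to the single leaf $\ell$ (which leaves $r(N)$ unchanged and makes the contracted $U_T,U_{T'}$ witness $S$ and $S'$), or by deleting all but three consecutive chain links --- keeping, when $C$ is pendant in $T$ and/or $T'$, the end link(s) forced by the shared parents there --- and relabelling them $a,b,c$ (deleting pendant leaves cannot raise $r(N)$, and the surviving path still carries the $3$-chain correctly oriented in both trees). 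Either way $h^u(S,S')\le r(N)=h^u(T,T')$.

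The hard part will be the normalization claim, which I would attack by an exchange/potential argument: among all optimal networks (with chosen witnessing subtrees) pick one minimizing a potential that quantifies how far the two embedded images of the common structure are from coinciding and from being pendant in $N$ itself, and show that a positive potential always admits a local re-routing of one embedding onto the other --- absorbing a nearby reticulation edge --- that strictly decreases the potential without increasing the reticulation number. For the chain one argues similarly that consecutive links can be pushed onto a common path and that three links are exactly enough: two to record the chain's endpoints and a third for its orientation and for the pendant sub-cases depicted in Figure~\ref{fig:chain-red}, which is why the reduction target is a $3$-chain and why the hypothesis $n\ge 4$ is imposed. Carried out on TBR sequences rather than on networks, this exchange argument is precisely \cite[Theorem~3.4]{AllenSteel2001}; recasting it through Theorem~\ref{t:tbr-equiv} simply turns the quantity that must be preserved throughout the surgery into the reticulation number.
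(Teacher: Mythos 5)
The first thing to note is that the paper does not prove Theorem~\ref{t:safe} at all: it is imported verbatim from \cite[Theorem 3.4]{AllenSteel2001}, so there is no in-paper argument to compare against. Your reformulation via Theorem~\ref{t:tbr-equiv} is legitimate (that equivalence rests on the agreement-forest characterization of $d_\TBR$, not on reduction safety, so there is no circularity), and your expansion argument for $h^u(T,T')\le h^u(S,S')$ is sound routine bookkeeping, as you say.

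The gap is in the other direction, and it is not merely that the normalization claim is left unproven: as stated, it is false in the chain case. You ask for an optimal network $N$ in which $\ell_1,\dots,\ell_n$ lie along a path traversed \emph{identically} by both embeddings $U_T$ and $U_{T'}$. If that were always achievable, your own two-sided argument would prove that replacing a common chain by a \emph{single} leaf is safe (expanding one leaf back into the chain works exactly as in your easy direction), making the $3$-chain target of the reduction pointless; the paper's entire kernel analysis (Lemma~\ref{l:chain-length}, the $15k-9$ bound, the tight examples whose networks carry $9$-chains) depends on $3$-chains being irreducible. In reality an embedding of $T$ into $N$ may omit one edge of the chain path --- this is precisely the ``breakpoint'' phenomenon the paper formalizes in Section 3 --- each tree may break the chain at a different internal position, and for some instances every optimal network has at least one such break. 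The correct normalization must therefore allow one cut per tree and show that a $3$-chain is exactly enough to re-encode the (up to two, possibly distinct) cut positions after the surplus links are deleted; your finishing move of ``deleting all but three consecutive links'' does not explain how two distinct internal cuts are relocated onto the surviving triple. That relocation argument is the entire technical content of the theorem, and your sketch ultimately defers it to \cite[Theorem 3.4]{AllenSteel2001} --- that is, to the statement being proved.
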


Repeated applications of the previous lemma allow us to obtain two trees, say $S$ and $S'$ on $X'$, from $T$ and $T'$, \purple{respectively}, that are subtree and chain reduced such that $d_\TBR(T,T')=d_\TBR(S,S')$. 
The importance of $S$ and $S'$ lies in the following kernelization result that we have alluded to in the introduction and that is a direct consequence of~\cite[Lemmas 3.6 and 3.7]{AllenSteel2001}.

\begin{theorem}\label{t:tbr-first-kernel}
Let $T$ and $T'$ be two unrooted binary phylogenetic trees on $X$, and let $S$ and $S'$ be a subtree and chain reduced tree pair on $X'$ that has been obtained from $T$ and $T'$ by repeated applications of the subtree and chain reduction. Then \blue{$|X'|\leq 28d_\TBR(T,T').$}
\end{theorem}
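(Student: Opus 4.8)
The plan is to reduce the statement to a bound in terms of $d_\TBR(S,S')$ and then invoke the two lemmas of Allen and Steel that it is a consequence of. For the reduction, note that each reduction used to pass from $T,T'$ to $S,S'$ is safe by Theorem~\ref{t:safe}, so iterating gives $d_\TBR(T,T') = d_\TBR(S,S')$; it therefore suffices to show $|X'| \le 28\,d_\TBR(S,S')$. Write $k = d_\TBR(S,S')$.

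For the second step one may simply quote~\cite[Lemmas 3.6 and 3.7]{AllenSteel2001}, but it is worth recalling the mechanism, since the contribution of the present paper is to sharpen exactly this estimate. The argument rests on Allen and Steel's structural theorem $d_\TBR(S,S') = m(S,S')-1$, where $m(S,S')$ denotes the number of components in a \emph{maximum agreement forest} $F$ of $S$ and $S'$; thus $|F| = k+1$. Realising $F$ inside $S$ (respectively $S'$) by deleting $k$ edges and suppressing degree-2 vertices produces at most $2k$ edge-endpoints in each tree, which we call \emph{attachment points}. One then bounds $|X'| = \sum_{B\in F}|B|$ by splitting $F$ into singletons (at most $k+1$ of them) and non-singletons. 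For a non-singleton $B$, subtree-reducedness is what prevents $B$ from being large: the restriction $S|B \cong S'|B$ is not a common pendant subtree, so in both $S$ and $S'$ the edges of its embedding are genuinely subdivided by attachment points. Along any such edge, the leaves of $B$ branching off between two consecutive attachment points form a chain of $S$; since $S$ and $S'$ are chain-reduced and agree on $B$, such a stretch has at most $3$ leaves. Charging each such stretch, together with each internal branching vertex of the various $S|B$'s, to the $O(k)$ attachment points available in $S$ and in $S'$, one gets $\sum_{B\in F}|B| = O(k)$, and a careful tally of the worst case yields the constant $28$.

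The real work lies in that last tally: one must enumerate, without double counting, every place leaves can pile up — short chains strung between consecutive attachment points, leaves incident to the attachment points themselves, and the fact that the two embeddings (into $S$ and into $S'$) may produce entirely different attachment patterns that have to be controlled simultaneously — and it is the slack in this enumeration that produces $28$ rather than a smaller constant. That slack is precisely what the present paper later removes: routing the count instead through Theorem~\ref{t:tbr-equiv} and Observation~\ref{o:generator} (a minimum network $N$ displaying $S$ and $S'$ has no pendant subtree of size $\ge 2$, or else $S,S'$ would share one, so $N$ underlies a $k$-generator with $3k-3$ sides, and the leaves on each side are charged, roughly speaking, to that side together with the at most $k$ ``detours'' each embedding makes) gives the tight bound $15k-9$. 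For the statement as phrased here, however, nothing beyond combining Theorem~\ref{t:safe} with~\cite[Lemmas 3.6 and 3.7]{AllenSteel2001} is required.
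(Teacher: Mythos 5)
Your proposal is correct and takes essentially the same route as the paper, which itself offers no proof beyond observing that the bound is a direct consequence of Lemmas 3.6 and 3.7 of Allen and Steel (combined, as you do, with the distance-preservation guaranteed by Theorem~\ref{t:safe}). Your additional sketch of the agreement-forest counting behind the constant $28$ is a reasonable informal account of Allen and Steel's argument, but it is not load-bearing here since the citation carries the weight.
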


Noting that the subtree and chain reduction can be applied in time that is polynomial in the size of $X$, it immediately follows that computing the TBR distance is fixed-parameter tractable when parameterized by this minimum distance.\\

\noindent
\steven{{\bf Cluster reduction.} In \cite{bordewich2017fixed} it is shown that if $T$ and $T'$ have a common non-trivial cluster, computation of $d_\TBR(T,T')$ can be reduced to \purple{the} computation of $d_\TBR$ on two new smaller pairs of trees obtained by
\polish{decomposing} $T$ and $T'$ around the common cluster. 
(Compared to the subtree and chain reductions \polish{the} cluster reduction involves a number of subtleties; for brevity we do not describe them here.) This decomposition procedure can be continued until trees are obtained without non-trivial common clusters.} \blue{If $T$ and $T'$ do not have a non-trivial common cluster, we say that $T$ and $T'$ are {\it cluster reduced}.}\\  
\\
\noindent{\bf Maximum Parsimony distance.} A {\it character} $f$ on $X$ is a function $f:X \rightarrow C$, where $C=\{c_1,c_2,\ldots,c_r\}$ is a set of {\it character states} for some positive integer $r$. Let $T$ be an unrooted phylogenetic tree on $X$ with vertex set $V$, and let $f$ be a character on $X$ whose set of character states is $C$. An {\it extension} $g$ of $f$ to $V$ is  a function $g: V \rightarrow C$ such that \purple{$g(\ell)=f(\ell)$} for each $\ell\in X$. Given an extension $g$ of $f$, let $l_g(T)$ denote the number of edges $\{u,v\}$ in $T$ such that $g(u)\ne g(v)$. Then the {\it parsimony score} of $f$ on $T$, denoted by $l_f(T)$, is obtained by minimizing  $l_g(T)$ over all possible extensions $g$ of $f$. Lastly, for two unrooted phylogenetic trees $T$ and $T'$ on $X$, we define the {\it maximum parsimony distance} $d_{\MP}$ as  $$d_{\MP}(T,T')=\max_f |l_f(T)-l_f(T')|.$$ Importantly, for two unrooted binary phylogenetic trees, the maximum parsimony distance is a lower bound on the TBR distance as noted in the discussion of~\cite{fischer2014}. \blue{We will freely use this fact throughout the rest of this paper. }For more details on the maximum parsimony distance, we refer the interested reader to~\cite{fischer2014,kelk2016reduction,moulton2015}. 

\section{Tight kernels for computing the TBR distance}
\purple{In this section, we use generators to reanalyze the TBR distance kernelization result by Allen and Steel~\cite{AllenSteel2001} and establish a kernel for the problem whose size is significantly smaller.} Let $T$ and $T'$ be two unrooted binary phylogenetic trees on $X$, and let $N$ be an unrooted phylogenetic network on $X$ that displays $T$ and $T'$. Furthermore, let $C=(\ell_1,\ell_2,\ldots,\ell_n)$ be an $n$-chain of $N$, and for each $i\in\{1,2,\ldots,n\}$, let $p_i$ be the unique neighbor of $\ell_i$ in $N$. Since each embedding of $T$ (resp. $T'$) into $N$ uses either all or all but one edge in $\{\{p_1,p_2\},\{p_2,p_3\},\ldots,\{p_{n-1},p_n\}\}$, it is straightforward to check that exactly one of the following three \steven{cases} holds.
\begin{enumerate}
\item $C$ is a chain of $T$ and $T'$
\item There exists a \steven{\emph{breakpoint}} $i\in\{1,2,\ldots,n-1\}$, such that $$C_1=(\ell_1,\ell_2,\ldots,\ell_i)\textnormal{ and } C_2=(\ell_{i+1},\ell_{i+2},\ldots,\ell_{n})$$ are chains of $T$, and $C$ is a chain of $T'$, or $C$ is a chain of $T$ and $C_1$ and $C_2$ are chains of $T'$. 
\item There exist two \blue{not necessarily distinct} breakpoints $i,j\in\{1,2,\ldots,n-1\}$ such that $$(\ell_1,\ell_2,\ldots,\ell_i)\textnormal{ and }  (\ell_{i+1},\ell_{i+2},\ldots,\ell_n)$$ are chains of $T$ and $$(\ell_1,\ell_2,\ldots,\ell_j)\textnormal{ and }  (\ell_{j+1},\ell_{j+2},\ldots,\ell_n)$$ are chains of $T'$.
\end{enumerate} 
We say that $C$ has 0, 1, or 2 {\it breakpoints} relative to $T$ and $T'$ depending on whether $C$ is not cut (Case (1)), cut once (Case (2)), or cut twice (Case (3)). Intuitively, the number of breakpoints indicates how many trees of $T$ and $T'$ do not have $C$ as a chain.\\


\begin{lemma}\label{l:edges}
Let $N$ be an unrooted binary phylogenetic network with  $r(N)=k\geq 2$ \blue{and with no pendant subtree of size at least two}. Furthermore, let $G$ be the generator that underlies $N$. Then $G$ has $3(k-1)$ sides.
\end{lemma}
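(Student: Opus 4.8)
The plan is to derive the count purely from the defining properties of a generator via a degree-sum (handshake) argument, with no induction needed. First I would invoke Observation~\ref{o:generator}: the graph $G$ obtained from $N$ by deleting all leaves and repeatedly suppressing degree-2 vertices is a $k$-generator, so in particular $G$ is a connected cubic multigraph whose edge and vertex sets $E$ and $V$ satisfy $k = |E| - (|V|-1)$. Recall that the \emph{sides} of $G$ are by definition precisely its edges, so it suffices to show $|E| = 3(k-1)$.

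Next I would apply the handshake lemma to $G$. Since $G$ is cubic, every vertex has degree exactly $3$ (a loop at a vertex contributing $2$ to that vertex's degree, and parallel edges counted with multiplicity), so summing degrees over all vertices gives $3|V| = 2|E|$, i.e. $|V| = \tfrac{2}{3}|E|$. Substituting this into $k = |E| - |V| + 1$ yields $k = |E| - \tfrac{2}{3}|E| + 1 = \tfrac{1}{3}|E| + 1$, and hence $|E| = 3(k-1)$ (as a byproduct one also gets $|V| = 2(k-1)$). Since the number of sides of $G$ equals $|E|$, this finishes the proof.

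I do not expect a genuine obstacle here; the only points deserving a little care are (i) confirming that $G$ really is cubic — which is exactly where the hypotheses $r(N)=k\geq 2$ and ``$N$ has no pendant subtree of size at least two'' enter, via Observation~\ref{o:generator}, ruling out the degenerate $1$-generator (a single vertex with a loop, which is not cubic and for which the stated count would give $0$ sides) — and (ii) being consistent about counting loops twice in the degree sum, so that the identity $3|V| = 2|E|$ is valid for multigraphs with loops and parallel edges. Everything beyond that is the one-line calculation above.
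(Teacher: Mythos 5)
Your proof is correct and follows essentially the same route as the paper's: both apply the Handshaking Lemma to the cubic multigraph $G$ to get $2|E|=3|V|$ and then substitute into $k=|E|-|V|+1$ to conclude $|E|=3(k-1)$. Your additional remarks about loops contributing $2$ to a vertex's degree and about the hypotheses excluding the degenerate $1$-generator are sensible but not points the paper needed to belabor.
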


\begin{proof}
Let $E$ be the edge set of $G$, and let $V$ be the vertex set of $G$. By construction of $G$ from $N$, recall that each vertex of $G$ has degree 3 and that $|E|-|V|+1=k$. Hence, 
$$2|E|=3|V|=3(|E|-k+1),$$ where the first equality is due to the Handshaking Lemma. Now, solving for $|E|$, we have $|E|=3(k-1)$. Since $E$ is equal to the set of sides of $G$, the lemma follows.\qed
\end{proof}


\begin{lemma}\label{l:chain-length}
Let $S$ and $S'$ be two unrooted binary phylogenetic trees $X$, and let $N$ be an unrooted phylogenetic network on $X$ that displays $S$ and $S'$. Furthermore, let $C$ be an $n$-chain of $N$. Depending on the number of breakpoints of $C$ relative to $S$ and $S'$, $n$ is bounded from above in the following way.
\begin{enumerate}
\item Suppose that $S$ and $S'$ are subtree and chain reduced. Then $n\leq 3$ if $C$ has 0 breakpoints, $n\leq 6$ if $C$ has 1 breakpoint, and $n\leq 9$ if $C$ has 2 breakpoints.
\item \blue{Suppose that $S$ and $S'$ are subtree, chain, and cluster reduced. Then $n\leq 3$ if $C$ has 0 breakpoints, $n\leq 6$ if $C$ has 1 breakpoint, and $n\leq 7$ if $C$ has 2 breakpoints.}
\end{enumerate}
\end{lemma}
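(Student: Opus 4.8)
\noindent\emph{Proof plan.} The plan is to reduce everything to one elementary fact: since $S$ and $S'$ are chain reduced, any leaf sequence that is simultaneously a chain of $S$ and of $S'$ has length at most $3$ (a common chain extends to a maximal common chain, which, if of length $\geq 4$, is contracted by the chain reduction); I will combine this with the trivial remark that every contiguous subsequence of a chain of a tree is again a chain of that tree. The first five sub-bounds follow at once. If $C$ has $0$ breakpoints then $C$ is a common chain, so $n\leq 3$. If $C$ has $1$ breakpoint at position $i$, then one of $S,S'$ has $C$ as a chain and the other has $(\ell_1,\ldots,\ell_i)$ and $(\ell_{i+1},\ldots,\ell_n)$ as chains; each of these two pieces is a chain of the cut tree and a prefix or suffix of the chain $C$ of the uncut tree, hence a common chain of length $\leq 3$, and $n\leq 6$. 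If $C$ has $2$ breakpoints, at position $i$ relative to $S$ and $j$ relative to $S'$ with (say) $i\leq j$, then $(\ell_1,\ldots,\ell_i)$, $(\ell_{i+1},\ldots,\ell_j)$ and $(\ell_{j+1},\ldots,\ell_n)$ are each a contiguous subsequence both of a chain of $S$ and of a chain of $S'$, hence a common chain of length $\leq 3$, so $n\leq 9$. This proves Part~(1), and the $0$- and $1$-breakpoint cases of Part~(2) are identical.

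It remains to sharpen the $2$-breakpoint bound to $n\leq 7$ when $S$ and $S'$ are also cluster reduced, and here I would bring in the network $N$. Fix embeddings of $S$ and of $S'$ in $N$. Since $C$ is a chain of $N$ but, having $2$ breakpoints, a chain of neither $S$ nor $S'$, the ``all or all but one'' property of the backbone edges $\{p_1,p_2\},\ldots,\{p_{n-1},p_n\}$ of $C$ (already used before the statement) forces each embedding to omit \emph{exactly} one of them; say the embedding of $S$ omits $\{p_a,p_{a+1}\}$ and that of $S'$ omits $\{p_b,p_{b+1}\}$. Then $a$ and $b$ are breakpoints of $C$ relative to $S$ and to $S'$, respectively. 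If $a=b$ then $(\ell_1,\ldots,\ell_a)$ and $(\ell_{a+1},\ldots,\ell_n)$ are chains of both trees, so $n\leq 6$; hence assume $a\neq b$, say $a<b$, and put $M=\{\ell_{a+1},\ldots,\ell_b\}$.

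The crux is that $M$ is a cluster of $S$ and also a cluster of $S'$. In $N$, the subgraph spanned by the leaves of $M$ and their parents $p_{a+1},\ldots,p_b$ is joined to the rest of $N$ only through the two edges $\{p_a,p_{a+1}\}$ and $\{p_b,p_{b+1}\}$, since $p_{a+1},\ldots,p_b$ are internal backbone vertices whose non-leaf neighbours all lie on the backbone of $C$. The embedding of $S$ omits $\{p_a,p_{a+1}\}$ but retains all the backbone edges strictly between $p_{a+1}$ and $p_b$, the leaf edges at $p_{a+1},\ldots,p_b$, and the edge $\{p_b,p_{b+1}\}$; hence in $S$ the image of $\{p_b,p_{b+1}\}$ separates the leaves of $M$ from $X-M$, so $M$ is a cluster of $S$, and symmetrically $M$ is a cluster of $S'$. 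As $S,S'$ are cluster reduced, $M$ cannot be non-trivial, so $b-a=1$ and $M=\{\ell_{a+1}\}$. Finally $(\ell_1,\ldots,\ell_a)$ is a chain of $S$ and a prefix of the chain $(\ell_1,\ldots,\ell_b)$ of $S'$, while $(\ell_{a+2},\ldots,\ell_n)$ is a chain of $S'$ and a suffix of the chain $(\ell_{a+1},\ldots,\ell_n)$ of $S$; both are common chains of length $\leq 3$, so $n=a+1+(n-a-1)\leq 3+1+3=7$.

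The step I expect to be delicate is the passage to $N$ and the cluster claim: one must check that each embedding cuts the backbone of $C$ in a single edge, that the resulting cut positions are genuine breakpoints in the sense of the statement, and that near $C$ the two embeddings keep exactly the edges realizing $M$ as a cluster in each tree (including that the relevant endpoints survive the suppression of degree-$2$ vertices); the case $a=b$ also has to be separated off. Everything else is bookkeeping around the ``common chains have length $\leq 3$'' principle.
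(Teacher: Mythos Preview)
Your proposal is correct and follows essentially the same route as the paper. For Part~(1) and the first two cases of Part~(2) the paper argues exactly as you do, via pigeonhole on the at most three common subchains. For the two-breakpoint case of Part~(2) the paper simply asserts ``observe that $\{\ell_{i+1},\ldots,\ell_j\}$ is the leaf set of a pendant subtree in $S$ and $S'$'' and then uses cluster reduction to force $i+1=j$; your argument is the same, except that you unpack this ``observe'' by tracking which backbone edge each embedding omits and checking that the middle block is separated from the rest of each tree by the surviving edge $\{p_b,p_{b+1}\}$ (resp.\ $\{p_a,p_{a+1}\}$). The delicate points you flag---that each embedding omits exactly one backbone edge, and that this omission is the breakpoint---are precisely what the paper's preamble before the lemma establishes, so no extra work is needed there.
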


\begin{proof}
\blue{We establish the lemma for when $C$ has two breakpoints. The other cases are similar and omitted.} Let $C=(\ell_1,\ell_2,\ldots,\ell_n)$, and let $i$ and $j$ be the two breakpoints of $C$ relative to $S$ and $S'$. Without loss of generality, we may assume that $i\leq j$.  Since $C$ has two breakpoints, one of the following two cases applies.
\begin{enumerate}[(a)]
\item If $i=j$, then the chains $(\ell_1,\ell_2,\ldots,\ell_i)\textnormal{ and }  (\ell_{i+1},\ell_{i+2},\ldots,\ell_n)$ are common to $S$ and $S'$. 
\item If $i<j$, then the chains $(\ell_1,\ell_2,\ldots,\ell_i),  (\ell_{i+1},\ell_{i+2},\ldots,\ell_j),\textnormal{ and } (\ell_{j+1},\ell_{j+2},\ldots,\ell_{n})$ are common to $S$ and $S'$.
\end{enumerate}
\blue{First, suppose that $S$ and $S'$ are subtree and chain reduced and, towards a contradiction, assume that $n>9$.
Regardless of whether (a) or (b) applies, it follows by the pigeonhole principle that one of the smaller chains has length at least 4 and is common to $S$ and $S'$; thereby contradicting that $S$ and $S'$ are chain reduced. 
Second, suppose that $S$ and $S'$ are subtree, chain, and cluster reduced and, towards a contradiction, assume that $n>7$. If $i<j$, observe that $\{\ell_{i+1},\ell_{i+2},\ldots,\ell_j\}$ is the leaf set of a pendant subtree in $S$ and $S'$. Hence $\{\ell_{i+1},\ell_{i+2},\ldots,\ell_j\}$ is a cluster that is common to $S$ and $S'$. Since $S$ and $S'$ are cluster reduced, this implies that $i+1=j$. Now, regardless of whether (a) or (b) applies, it follows again by the pigeonhole principle that one of the other chains that is common to $S$ and $S'$ has length at least 4; a contradiction.}
\qed
\end{proof}

Let $T$ and $T'$ be two unrooted binary phylogenetic trees. Using generators, the next lemma exploits the equivalence between computing the TBR distance and UHN to establish a new and improved upper bound on the number of leaves of a pair of \purple{subtree and chain reduced} trees.

\begin{lemma}\label{l:tbr-cs}
Let $S$ and $S'$ be two unrooted binary phylogenetic trees on $X$ that are subtree and chain reduced, \polish{and $d_\TBR(S,S')\geq 2$}. Then, $|X|\leq 15d_\TBR(S,S')-9$.
\end{lemma}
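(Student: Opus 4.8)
The plan is to use Theorem~\ref{t:tbr-equiv} to replace $d_\TBR(S,S')$ by $h^u(S,S')$, and then work with a network $N$ on $X$ that displays both $S$ and $S'$ and achieves $r(N)=k:=d_\TBR(S,S')$. Since $S$ and $S'$ are subtree reduced, I would first argue that $N$ can be chosen to have no pendant subtree with more than a single leaf: any such common pendant subtree could be contracted without affecting the displaying property or $r(N)$, and because $S,S'$ are subtree reduced it would have to be a single leaf. With that in hand, Observation~\ref{o:generator} and Lemma~\ref{l:edges} apply: the generator $G$ underlying $N$ has exactly $3(k-1)$ sides, and every leaf of $N$ sits on a chain attached to one of these sides.

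The key counting step is then to bound, side by side, how many leaves can hang off each side of $G$. Each side $s$ of $G$ carries a chain $C_s$ of $N$ (possibly empty), and the leaves of $C_s$ partition into sub-chains determined by the breakpoints of $C_s$ relative to $S$ and $S'$, as in the trichotomy preceding Lemma~\ref{l:chain-length}. By Lemma~\ref{l:chain-length}(1), $|C_s|\le 3$ if $C_s$ has $0$ breakpoints, $|C_s|\le 6$ if it has $1$ breakpoint, and $|C_s|\le 9$ if it has $2$ breakpoints. So the crude bound is $|X|\le 9\cdot 3(k-1)=27(k-1)$, which is already better than $28k$ but not yet $15k-9$; the work is to show that breakpoints are a scarce resource. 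The point is that each breakpoint ``costs'' a reticulation: intuitively, a chain of $N$ that is cut in $S$ (a breakpoint relative to $S$) forces the embedding of $S$ to omit an edge of that chain and reroute, and globally the number of such omitted edges is controlled by $r(N)=k$. I would make this precise by a global accounting over all sides: the total number of breakpoints over all chains $C_s$ is at most something like $2k$ (or a similarly small linear function of $k$), using that $N$ has only $k$ ``extra'' edges beyond a tree and that each breakpoint witnesses a distinct place where the $S$-embedding or $S'$-embedding diverges from the generator structure.

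Given a bound of the form: total breakpoints $\le B(k)$, I would then optimize the leaf count. Writing $a$ for the number of sides with $0$ breakpoints, $b$ with $1$, and $c$ with $2$, we have $a+b+c=3(k-1)$, $b+2c\le B(k)$, and $|X|\le 3a+6b+9c = 9(a+b+c) - 6b - 6c \le 27(k-1)-6(b+c)$; alternatively $|X|\le 3(a+b+c)+3b+6c = 9(k-1)+3(b+2c)\le 9(k-1)+3B(k)$. To land on exactly $15k-9=15(k-1)+6$, this suggests I want $3B(k)$ to contribute $6(k-1)+6$, i.e. $B(k)=2(k-1)+2=2k$; so the crux is proving that a network realizing $h^u(S,S')=k$ has at most $2k$ breakpoints in total across its chains (equivalently, at most $k$ breakpoints per tree). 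That bound should follow from the displaying structure: removing the $k$ extra edges of $N$ (or an appropriate spanning-tree argument) leaves each of $S$, $S'$ embeddable, and each breakpoint relative to $S$ corresponds to an edge of some chain that the $S$-embedding must skip, while the $S$-embedding can skip at most one edge per reticulation-induced cycle, hence at most $k$ edges in total.

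The main obstacle I anticipate is exactly this global breakpoint bound: relating a purely combinatorial quantity (number of cut chains) to the reticulation number $k$ via the two embeddings. The subtlety is that a single side of $G$ can have up to two breakpoints and different sides' breakpoints could in principle be ``charged'' to the same reticulation, so I need an injective (or at least suitably bounded) assignment from breakpoints to the $k$ independent cycles of $N$, separately for $S$ and for $S'$, and then argue the two assignments can be combined. I would handle the boundary/degenerate cases (e.g.\ a chain pendant in one tree, or breakpoints coinciding, or very small generators when $k=2$) carefully, since Lemma~\ref{l:chain-length} already flags that pendant chains collapse under the subtree reduction; the hypothesis $d_\TBR(S,S')\ge 2$ ensures $G$ is a genuine cubic multigraph so that Lemma~\ref{l:edges} is available. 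Once the $2k$ breakpoint bound is established, the arithmetic above closes the argument.
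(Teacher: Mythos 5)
Your proposal follows essentially the same route as the paper: an optimal displaying network via Theorem~\ref{t:tbr-equiv}, the underlying generator with its $3(k-1)$ sides (Lemma~\ref{l:edges}), the per-side bounds of $3/6/9$ leaves from Lemma~\ref{l:chain-length}, and a global budget of $2k$ on breakpoints, closed by the same arithmetic. The crux you flag --- that the total number of breakpoints is at most $2k$ --- is resolved in the paper exactly by the spanning-tree argument you suggest: the embeddings of $S$ and $S'$ are extended to spanning trees $B,B'$, each of which has $|V|-1=|E|-k$ edges and hence omits exactly $k$ edges (at most one per chain path, by connectivity), and every breakpoint forces a distinct omitted edge because the skipped chain edge would close a cycle if added back.
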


\begin{proof}
Let $N$ be an unrooted binary phylogenetic network on $X$ with edge set $E$ \blue{and vertex set $V$} that displays $S$ and $S'$ such that $$r(N)=h^u(S,S')=d_\TBR(S,S')\steven{=}k\geq 2.$$ By Theorem~\ref{t:tbr-equiv}, such a network exists.
Let $G$ be the generator that underlies $N$. \blue{Furthermore, let $D$ and $D'$ be two subdivisions of $S$ and $S'$, respectively, in $N$. Since $N$ displays $S$ and $S'$, such subdivisions exist. If $D$ is a spanning tree of $N$, then set $B=D$ and, otherwise, set $B$ to be a spanning tree of $N$ obtained from $D$ by greedily adding edges. Similarly,  if $D'$ is a spanning tree of $N$, then set $B'=D'$ and, otherwise, set $B'$ to be a spanning tree of $N$ obtained from $D'$ by greedily adding edges. (Observe that $B$ and $B'$ may have unlabeled leaves.) Moreover, as $ |E|=|V|-1+k$, observe that each of $B$ and $B'$ has $|V|-1=|E|-k$ edges, where the left-hand side of the equation follows from the definition of a spanning tree.}


\blue{Now, as $S$ and $S'$ are subtree reduced, it is sufficient to attach leaves to the sides of $G$ in the process of obtaining $N$ from $G$ (see Observation~\ref{o:generator}).} Let $s=\{u,v\}$ be a side of $G$.  We next assign a {\it cut count} $c_s$ to $s$.  
First, if $s$ is decorated with at least one leaf in obtaining $N$ from $G$, let $Y=\{\ell_1,\ell_2,\ldots,\ell_n\}$ be the set of leaves that are attached to $s$. By construction, $N$ has a maximal $n$-chain $C$ whose leaves are bijectively labeled with the elements in $Y$. Let $P$ be the path from $u$ to $v$ in $N$ that has length $n+1$ and whose vertices (except for $u$ and $v$) are neighbors of the elements in $Y$. We  define $c_s$ to be the number of \blue{trees} in $\{B,B'\}$ that do not use all edges of $P$. Since $B$ and $B'$ \blue{both span $N$}, note that there is at most one edge that is not used by $B$ and at most one (not necessarily distinct) edge that is not used by $B'$. Hence, relative to $C$, we have $b_C\leq c_s$, where $b_C$ is the number of breakpoints of $C$ relative to $S$ and $S'$. \blue{It follows from Lemma~\ref{l:chain-length}.1 that}, if $c_s=0$, then at most 3 taxa can be attached to $s$ in obtaining $N$ from $G$. Similarly, if $c_s=1$, then at most 6 taxa can be attached to $s$ and, if $c_s=2$, then at most \polish{9} taxa can be attached to $s$ in this process. Second, if $s$ is not decorated with any leaf when obtaining $N$ from $G$, then $\{u,v\}$ is an edge in $N$, and we define $c_s$ to be the number of \blue{trees} in $\{B,B'\}$ that do not use $\{u,v\}$. \blue{Since we are establishing an upper bound on $|X|$, we may assume for the remainder of this proof that the bounds established in Lemma~\ref{l:chain-length} also apply for when $s$ is not \purple{decorated}.} By assigning a cut count to each side of $G$, \polish{and recalling that both $B$ and $B'$ contain $|E|-k$ edges,} it is easily checked that  
\blue{$$\sum_s c_s=2k,$$ where the sum is taken over all sides of $G$. }

Now, let $0\leq q\leq k$ be the number of sides of $G$ whose cut count is equal to two. Consequently, \polish{by leveraging the above equality}, there are $2(k-q)$ sides whose cut count is equal to 1 \blue{and, by Lemma~\ref{l:edges},  $3(k-1) - (k + k -q)$ sides whose cut count is zero.  Hence, as $S$ and $S'$ are subtree and chain reduced, this implies that}
$$|X|\leq 9q + 6\cdot 2(k-q) + 3(3(k-1) - (k + k -q))=15k-9=15d_\TBR(S,S')-9,$$
thereby establishing the lemma.\qed
\end{proof}
\blue{\noindent {\bf Remark.} Let $S$ and $S'$ be two unrooted binary phylogenetic trees on $X$ that are subtree, chain, and cluster reduced. Clearly,  Lemma~\ref{l:tbr-cs} holds for $S$ and $S'$. Alternatively, by using an argument that is similar to that in the proof of Lemma~\ref{l:tbr-cs} as well as the bounds derived in Lemma~\ref{l:chain-length}.2, we obtain $$|X|\leq 7q + 6\cdot 2(k-q) + 3(3(k-1) - (k + k -q))=-2q+15k-9.$$ Now noting that the right-hand side of the last inequality is maximized for $q=0$, we again have $$|X|\leq -2q+15k-9\leq 15k-9=15d_\TBR(S,S')-9.$$ For $k\geq 4$, we will see in the next section that  $15k-9$ is a tight upper bound on the size of the leaf set $X$ of two unrooted binary phylogenetic trees $S$ and $S'$ with $d_\TBR(S,S')=k$ and that are reduced under all three reductions. To establish this result, the fact that no side has a cut count of two (i.e. $q=0$)  gives us some important clues about the properties of $S$ and $S'$. In particular, it turns out that  $S$ and $S'$ are displayed by an unrooted binary phylogenetic network $N$ on $X$ with $r(N)=k$ such that $N$ has no chain of length 7. For full details, see Section~\ref{sec:tbr-ex}.}\\

In comparison to Theorem~\ref{t:tbr-first-kernel} \blue{that was first established in~\cite{AllenSteel2001}}, the next theorem, which is an immediate consequence of Theorem~\ref{t:safe} and Lemma~\ref{l:tbr-cs}, establishes a significantly improved linear kernel  for computing the TBR distance.

\begin{theorem}\label{t:tbr-kernel}
Let $T$ and $T'$ be two unrooted binary phylogenetic trees on $X$, \polish{where $d_\TBR(T,T')\geq 2$}, and let $S$ and $S'$ be a subtree and chain reduced tree pair on $X'$ that has been obtained from $T$ and $T'$ by repeated applications of the subtree and chain reduction. Then $|X'|\leq 15d_\TBR(T,T')-9$. 
\end{theorem}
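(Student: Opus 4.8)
The plan is to derive Theorem~\ref{t:tbr-kernel} as an essentially immediate corollary of the work already done. The statement concerns the trees $T$ and $T'$, but the bound is phrased in terms of the reduced pair $S$ and $S'$ together with $d_\TBR(T,T')$, so the proof is really just a matter of stitching together two earlier results: the safeness of the reductions (Theorem~\ref{t:safe}) and the size bound on a fully reduced pair (Lemma~\ref{l:tbr-cs}).

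\begin{proof}
Let $T$ and $T'$ be two unrooted binary phylogenetic trees on $X$ with $d_\TBR(T,T')\geq 2$, and let $S$ and $S'$ be a subtree and chain reduced tree pair on $X'$ obtained from $T$ and $T'$ by repeated applications of the subtree and chain reduction. By repeated application of Theorem~\ref{t:safe}, each individual reduction step preserves the TBR distance, and hence
$$d_\TBR(S,S')=d_\TBR(T,T').$$
In particular $d_\TBR(S,S')\geq 2$, so the hypotheses of Lemma~\ref{l:tbr-cs} are met for $S$ and $S'$. Applying that lemma gives
$$|X'|\leq 15\,d_\TBR(S,S')-9=15\,d_\TBR(T,T')-9,$$
which is the desired bound.\qed
\end{proof}

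I expect no real obstacle here: all the genuine work is in Lemma~\ref{l:tbr-cs} (the generator-based counting argument, using Lemma~\ref{l:edges} for the number of sides and Lemma~\ref{l:chain-length} for the per-side leaf bounds) and in Theorem~\ref{t:safe}. The only things to be careful about are (i) making sure the hypothesis $d_\TBR(T,T')\geq 2$ is actually used — it is needed so that after passing to $S,S'$ we still have TBR distance at least two and can invoke Lemma~\ref{l:tbr-cs}; and (ii) noting that the reduction process terminates in polynomially many steps so that a subtree and chain reduced pair $S,S'$ genuinely exists, which was already observed in the discussion following Theorem~\ref{t:safe}. If one wanted to say slightly more, one could remark (as the Remark before the theorem does) that the same bound holds even if the cluster reduction is additionally applied, since the extra reduction can only decrease $|X'|$ while still preserving $d_\TBR$; but for the theorem as stated the two-line argument above suffices.
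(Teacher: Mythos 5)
Your proof is correct and matches the paper exactly: the paper states Theorem~\ref{t:tbr-kernel} as an immediate consequence of Theorem~\ref{t:safe} (which gives $d_\TBR(S,S')=d_\TBR(T,T')$) and Lemma~\ref{l:tbr-cs}, which is precisely the two-step argument you give.
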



\section{\blue{Tight examples}}\label{sec:tbr-ex}
In this section, we show that the upper bounds on the size of \blue{the} leaf set of two unrooted binary phylogenetic trees that do not contain any common subtree and chain (and cluster) as established in Lemma~\ref{l:tbr-cs} are tight. To this end, we provide two \polish{families} of \blue{constructions}. 
Throughout this section, \purple{we  attach} leaves to a side $s$ of a generator that is depicted in \polish{Figure~\ref{fig:TBR-SC-ex} or Figure~\ref{fig:TBR-SCC-ex}}.
 If $s$ connects two vertices of a generator that lie on a horizontal line, we attach leaves to $s$ from left to right. Otherwise, we attach leaves to $s$ from top to bottom. \blue{Furthermore}, for a set of leaves that is attached to $s$, we order the elements from small to large and then attach from left to right or top to bottom in such a way that preserves the ordering of the elements.

We now begin with the description
\polish{of a family of pairs of} unrooted binary phylogenetic trees that are subtree and chain reduced (but not cluster reduced). For $k\geq 2$, consider the two unrooted binary phylogenetic trees $S_k$ and $S_k'$ with leaf set $X_k$ and \blue{$|X_k|=15k-9$} that are shown in Figure~\ref{fig:TBR-SC-ex}. 
It is easy to check that $S_k$ and $S_k'$ do not contain any common subtree of size at least 2 or any common $n$-chain with $n\geq 4$. Note however that $S_k$ and $S_k'$ do contain $k$ common clusters of size three. \blue{In particular, $S_k$ and $S_k'$ contain the common cluster $\{15i-8,15i-7,15i-6\}$ for each $i\in\{1,2,\ldots,k-1\}$ and, additionally, the common cluster $\{15k-14,15k-13,15k-12\}$.} 

\begin{figure}[t]
\center
\includegraphics[width=15cm]{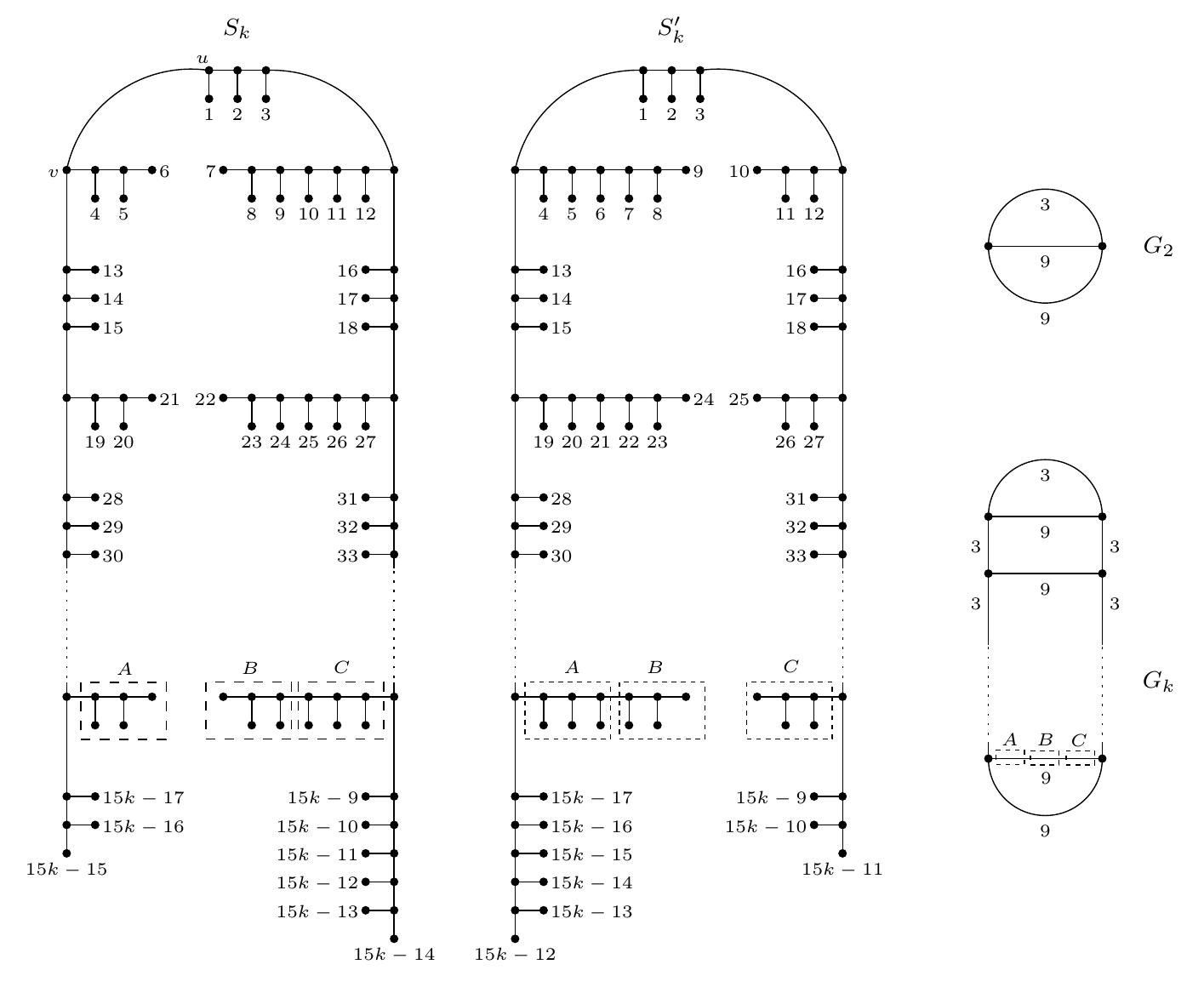}
\caption{\blue{The two unrooted binary phylogenetic trees $S_k$ and $S_k'$ as well as the generator $G_k$ (and $G_2$) that are used to show that the upper bound given in Lemma~\ref{l:tbr-cs} is tight for a pair of subtree and chain reduced trees. Blocks $A$, $B$, and $C$ indicate three chains each of length 3. For example, for $k=4$, block $A$ contains taxa 34, 35, and 36, block $B$ contains taxa 37, 38, and 39, and block $C$ contains taxa 40, 41, and 42. The edge $\{u,v\}$ of $S_k$ is used to argue that $k= d_\TBR(S_k,S_k')$ in the proof of Lemma~\ref{l:TBRsc=k}.}}
\label{fig:TBR-SC-ex}
\end{figure}

\begin{lemma}\label{l:TBR-cs-construction}
For $k\geq 2$, let $S_k$ and $S_k'$ be the two unrooted binary phylogenetic trees that are shown in Figure~\ref{fig:TBR-SC-ex}, and let $G_k$ be the generator that is shown in the same figure. There exists an unrooted binary phylogenetic network $N_k$ with $r(N_k)=k$ that displays $S_k$ and $S_k'$ and whose underlying generator is $G_k$.
\end{lemma}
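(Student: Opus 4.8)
The plan is to construct $N_k$ explicitly by taking the generator $G_k$ of Figure~\ref{fig:TBR-SC-ex} and attaching the leaves of $X_k$ to its sides in the prescribed left-to-right / top-to-bottom order, and then to verify that the resulting network simultaneously displays $S_k$ and $S_k'$. By Lemma~\ref{l:edges}, $G_k$ has $3(k-1)$ sides, and since $|X_k| = 15k-9 = 3\cdot 3(k-1) + \text{(a few extra)}$ — more precisely $15k-9 = 9(k-1) + 6k$ — one reads off from the figure how many leaves sit on each side. The first step is simply to record this attachment as a well-defined operation (via Observation~\ref{o:generator}, attaching a possibly empty set of leaves to each side), and to confirm that at least one leaf is attached to each loop and to each pair of parallel edges of $G_k$, so that the result is genuinely an unrooted binary phylogenetic network with $r(N_k) = r(G_k) = k$.

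Next I would exhibit an embedding (subdivision) of $S_k$ inside $N_k$. Recall that $N$ displays $S_k$ iff $S_k$ is obtained from a subtree of $N_k$ by suppressing degree-2 vertices; equivalently, I need to delete $k$ edges from $N_k$ (one per independent cycle, chosen suitably) so that the remaining spanning tree, after suppressing degree-2 vertices and ignoring unlabelled leaves, is exactly $S_k$. The figure is designed so that this is transparent: each "block" $A$, $B$, $C$ is a 3-chain hanging off a side, and the chains in $S_k$ versus $S_k'$ differ only in where certain sides are "cut". Concretely, I would specify, for each of the $k$ cycles of $G_k$, which edge to remove to recover $S_k$, and separately which (different) edge to remove to recover $S_k'$; the combinatorial structure of $G_k$ (a "caterpillar" of $k$ parallel-edge gadgets, with $G_2$ as the base case) makes these choices local and independent across the $k$ gadgets. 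I would then check that suppressing degree-2 vertices in each of the two resulting spanning trees yields precisely the trees drawn as $S_k$ and $S_k'$, matching the leaf labels and the chain orientations dictated by the stated attachment order.

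The main obstacle I anticipate is purely bookkeeping: making the correspondence between the abstract side-labels of $G_k$ and the concrete leaf labels $\{1,\dots,15k-9\}$ precise enough that "suppress degree-2 vertices and you get $S_k$" is genuinely checkable, especially handling the boundary gadgets (the leftmost and rightmost parts of $G_k$, and the special common cluster $\{15k-14,15k-13,15k-12\}$ that breaks the otherwise-periodic pattern). In particular one must be careful that cutting the chosen edges does not accidentally disconnect a block or merge two chains into a longer one in a way that would contradict the subtree/chain-reduced structure of $S_k, S_k'$ — but since we are only asked that $N_k$ \emph{displays} $S_k$ and $S_k'$ (not anything about minimality here), this reduces to a finite, inductive-in-$k$ verification: establish the base case $k=2$ directly from $G_2$, then observe that passing from $k$ to $k+1$ appends one more parallel-edge gadget with its three blocks, and the embeddings of $S_k, S_k'$ extend in the obvious way. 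I would present the $k=2$ case with an explicit picture/description and then indicate the inductive step, leaving the routine relabelling to the reader.

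Finally, it is worth noting that the claim $r(N_k)=k$ is immediate once $N_k$ is well-formed: attaching leaves to the sides of a $k$-generator never changes the quantity $|E|-(|V|-1)$ (each attached leaf adds one vertex and one edge, each subdivision adds one vertex and one edge), so $r(N_k) = r(G_k) = k$ by the definition of a $k$-generator. Thus the only real content is the two embeddings, and the proof is complete once those are exhibited.
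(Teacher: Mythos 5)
Your proposal is correct and follows essentially the same route as the paper: explicitly attach the leaf sets to the sides of $G_k$ as dictated by the figure, observe that $r(N_k)=k$ is preserved under leaf attachment, and exhibit the two embeddings by breaking each 9-chain between blocks $A$ and $B$ (for $S_k$) and between $B$ and $C$ (for $S_k'$). The paper simply writes out the attachment schedule directly rather than organizing the verification as an induction on $k$, but that is a presentational difference only.
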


\begin{proof}
The proof is constructive, i.e. starting with $G_k$, we attach leaves to the sides of $G_k$ to obtain a phylogenetic network that displays $S_k$ and $S_k'$. The lemma then follows from Observation~\ref{o:generator}.  \blue{Obtain an unrooted binary  phylogenetic network $N_k$ in the following way.}
\begin{enumerate}
\item Attach $\{1,2,3\}$ to the top-most horizontal side of $G_k$. 
\item For each $i\in\{1,2,\ldots,k-2\}$ in increasing order, perform the following three steps.
\begin{enumerate}
\item Attach $\{15i-11,15i-10,\ldots,15i-3\}$ to the top-most horizontal side of $G_k$ which is still undecorated.
\item Attach $\{15i-2,15i-1,15i\}$ to the top-most left-vertical side of $G_k$ which is still undecorated.
\item Attach $\{15i+1,15i+2,15i+3\}$ to the top-most-right vertical side of $G_k$ which is still undecorated.
\end{enumerate}
\item Attach $\{15k-26,15k-25,\ldots, 15k-18\}$ to the \blue{second-to-lowest} horizontal side of $G_k$.
\item Attach $\{15k-17,15k-16,\ldots,15k-9\}$ to the \blue{lowest} horizontal side of $G_k$.
\end{enumerate}
Observe that \purple{the label} of each side $s$ of $G_k$ as depicted in Figure~\ref{fig:TBR-SC-ex} refers to the number of leaves that is attached to $s$. Furthermore, note that $N_k$ has $k$ chains of length 9. Regarding each such chain as a sequence of three blocks where each block contains three leaves of the chain, which is indicated by $A$, $B$, and $C$ in Figure ~\ref{fig:TBR-SC-ex}, $S_k$ can be obtained from $N_k$ by breaking each 9-chain between $A$ and $B$ and suppressing all resulting degree-2 vertices, and $S_k'$ can be obtained from $N_k$ by breaking each 9-chain between $B$ and $C$ and suppressing all resulting degree-2 vertices. Hence, $N_k$ displays $S_k$ and $S_k'$. Moreover, by construction, $G_k$ underlies $N_k$. Let $E_k$ and $V_k$ be the edge and vertex set of $G_k$, respectively. We complete the proof by noting that, as $|E_k|-|V_k|+1=k$, it again follows by construction that $r(N_k)=k$.\qed
\end{proof}

\begin{lemma}\label{l:TBRsc=k}
For $k\geq 2$, let $S_k$ and $S_k'$ be the two unrooted binary phylogenetic trees that are shown in Figure~\ref{fig:TBR-SC-ex}. Then $d_\TBR(S_k,S'_k)=k$.
\end{lemma}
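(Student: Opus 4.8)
The plan is to prove $d_\TBR(S_k,S_k')=k$ by establishing the two inequalities separately. The upper bound $d_\TBR(S_k,S_k')\leq k$ is immediate from the preceding lemma: Lemma~\ref{l:TBR-cs-construction} produces an unrooted binary phylogenetic network $N_k$ with $r(N_k)=k$ that displays both $S_k$ and $S_k'$, so $h^u(S_k,S_k')\leq r(N_k)=k$, and hence $d_\TBR(S_k,S_k')=h^u(S_k,S_k')\leq k$ by Theorem~\ref{t:tbr-equiv}.

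For the lower bound $d_\TBR(S_k,S_k')\geq k$ I would use the fact that the maximum parsimony distance is a lower bound on the TBR distance, so that it suffices to exhibit a single character $f$ on $X_k$ with $|l_f(S_k)-l_f(S_k')|\geq k$. The natural candidate, suggested by the edge $\{u,v\}$ highlighted in Figure~\ref{fig:TBR-SC-ex}, is the $2$-state character $f$ whose two state classes are the two sides $Y$ and $X_k-Y$ of the bipartition of $X_k$ obtained by deleting $\{u,v\}$ from $S_k$. Since $\{u,v\}$ is a single edge of $S_k$, the extension that colours each side with its own state has exactly one mutating edge, so $l_f(S_k)=1$. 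It then remains to show $l_f(S_k')\geq k+1$, and this is the step I expect to be the main obstacle.

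To bound $l_f(S_k')$ I would argue directly about extensions of $f$ to the internal vertices of $S_k'$ (equivalently, run Fitch's algorithm, or use the standard lower bound that counts mutating edges along edge-disjoint paths whose endpoints are forced to receive different states). The mechanism should be that, when transported to $S_k'$, the bipartition $Y\mid X_k-Y$ is "maximally shuffled" relative to the $k$ gadgets of $S_k'$ (the $k-1$ common clusters $\{15i-8,15i-7,15i-6\}$, the extra cluster $\{15k-14,15k-13,15k-12\}$, and the top/bottom parts of the construction): along a caterpillar-like subpath of $S_k'$ that runs through all $k$ gadgets the attached leaves switch between the two state classes in each gadget, so that each gadget contributes at least one mutating edge and one further mutating edge is forced globally, giving at least $k+1$ in total. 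Reading the explicit topology of $S_k'$ off Figure~\ref{fig:TBR-SC-ex}, one arranges these $k+1$ mutating edges to lie on pairwise edge-disjoint segments, which is precisely what is needed in order to add their contributions; making this edge-disjointness argument airtight (i.e. ruling out an extension that lets several gadgets "share" one mutation) is the delicate point.

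Combining the two bounds gives
$$d_\TBR(S_k,S_k')\geq d_{\MP}(S_k,S_k')\geq |l_f(S_k)-l_f(S_k')|\geq (k+1)-1=k\geq d_\TBR(S_k,S_k'),$$
hence equality. An alternative route for the lower bound would be an induction on $k$ that peels off one gadget while decreasing $d_\TBR$ by exactly one and invokes Lemma~\ref{l:TBR-cs-construction} for the base case $k=2$; but the single-character argument is shorter, and it is the one indicated by the role of the edge $\{u,v\}$ in Figure~\ref{fig:TBR-SC-ex}.
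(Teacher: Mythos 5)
Your proposal is correct and follows essentially the same route as the paper: the upper bound via the network $N_k$ of Lemma~\ref{l:TBR-cs-construction} together with Theorem~\ref{t:tbr-equiv}, and the lower bound via the two-state character induced by the bipartition at $\{u,v\}$, giving $l_f(S_k)=1$ and $l_f(S_k')=k+1$ (the paper simply asserts the latter ``by applying Fitch's algorithm'') and then invoking $d_\MP\leq d_\TBR$.
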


\begin{proof}
Let $N_k$ \polish{be} the unrooted binary phylogenetic network whose construction is described in the proof of Lemma~\ref{l:TBR-cs-construction}. Then it follows from the same lemma and Theorem~\ref{t:tbr-equiv} that
\begin{equation}\label{eq:one}
d_\TBR(S_k,S_k')=h^u(S_k,S_k')\leq r(N_k)=k.
\end{equation}
 We complete the proof by showing that $d_\TBR(S_k,S_k')\geq k$. Let $X_k$ be the leaf set of $S_k$ and $S_k'$, and let $C=\{0,1\}$. Furthermore, let $f:X_k\rightarrow C$ be \polish{the} character on $X_k$
\polish{defined as follows.} 
 %
 \blue{Consider the bipartition of $X_k$ induced by removing the edge $\{u,v\}$ as labeled in Figure~\ref{fig:TBR-SC-ex}. Give one side of the partition state 0 and the other state 1. Clearly, $l_f(S_k)=1$. Moreover, by applying Fitch's algorithm~\cite{fitch1971}, we see that $l_f(S_k')=k+1$.} Hence, \purple{by definition of the maximum parsimony distance, we have}
\begin{equation} \label{eq:two}
k= |1-(k+1)|\leq d_\MP(S_k,S_k') \leq d_\TBR(S_k,S_k'). 
\end{equation}
Combining Inequalities~\ref{eq:one} and~\ref{eq:two} establishes the lemma.\qed
\end{proof}

We now turn to a \blue{construction} \purple{for two} unrooted binary phylogenetic trees \purple{that} are subtree, chain, and cluster reduced. For $k\geq 4$, consider the two trees $S_k$ and $S_k'$ on leaf set $X_k$ that are shown in Figure~\ref{fig:TBR-SCC-ex}. As with the trees depicted in Figure~\ref{fig:TBR-SC-ex}, note that $|X_k|=15k-9$. 

\begin{figure}[!ht]
\center
\includegraphics[width=15cm]{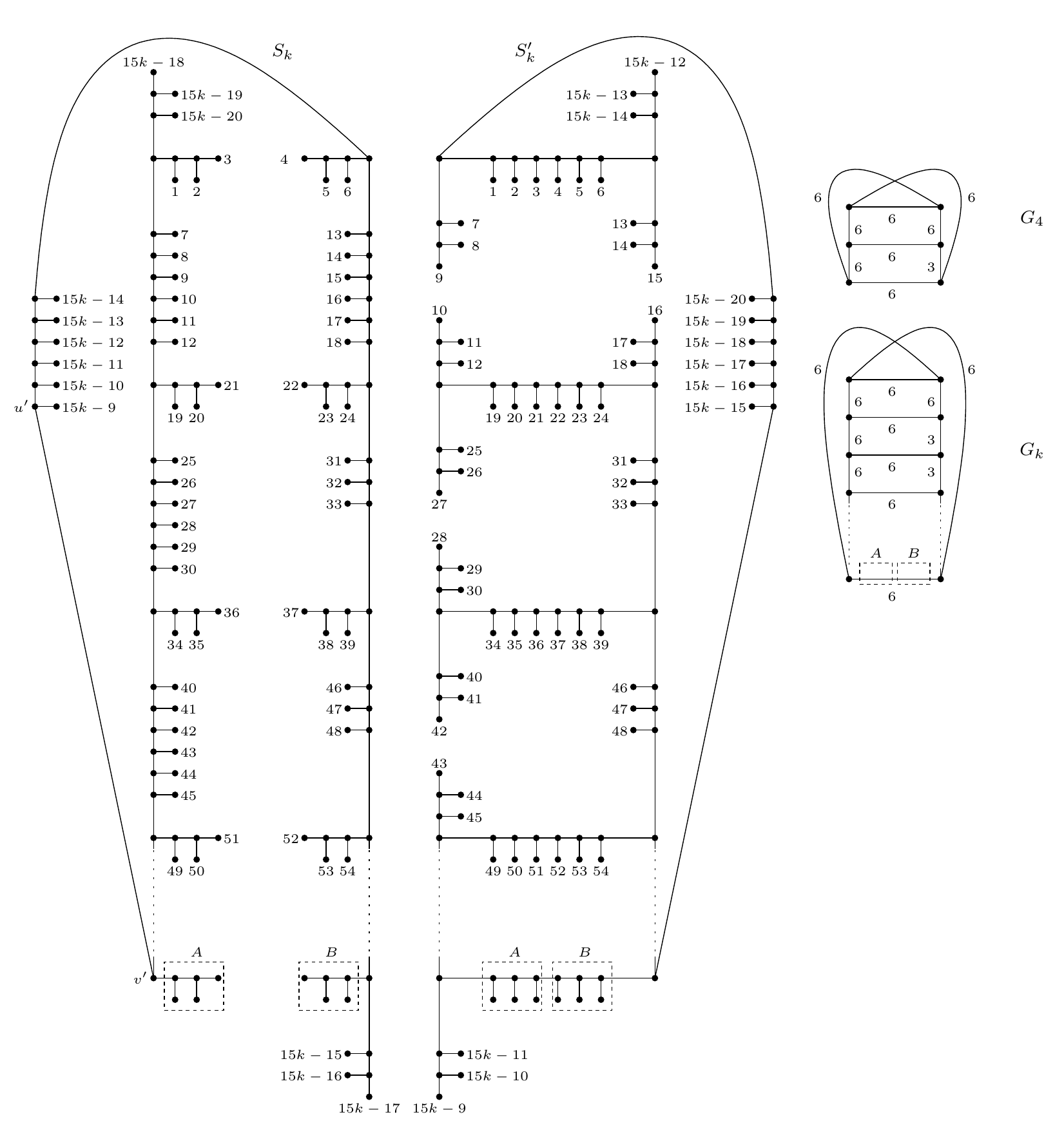}
\caption{The unrooted binary phylogenetic trees $S_k$ and $S_k'$ as well as the generator $G_k$ (and $G_4$) that are used to show that the upper bound given in Lemma~\ref{l:tbr-cs} is tight for a pair of subtree, chain, and cluster reduced trees. Blocks $A$ and $B$ indicate two chains each of length 3. For example, for $k=6$, block $A$ contains taxa 64, 65, and 66, and block $B$ contains taxa 67, 68, and 69. \blue{The edge $\{u',v'\}$ of $S_k$ is used to argue that $k= d_\TBR(S_k,S_k')$ in the proof of Lemma~\ref{l:TBRscc=k}.}}
\label{fig:TBR-SCC-ex}
\end{figure}

\begin{lemma}
For $k\geq 4$, let $S_k$ and $S_k'$ be the two unrooted binary phylogenetic trees that are shown in Figure~\ref{fig:TBR-SCC-ex}. Then $S_k$ and $S_k'$ are subtree, chain, and cluster reduced.
\end{lemma}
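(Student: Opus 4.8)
The plan is to verify the three reducedness properties directly from the structure of $S_k$ and $S_k'$ depicted in Figure~\ref{fig:TBR-SCC-ex}, and to do so in the order \emph{cluster}, then \emph{subtree}, then \emph{chain}, since cluster reducedness is the strongest condition and will subsume part of the work.

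\textbf{Step 1 (cluster reduced).} I would enumerate the non-trivial clusters of $S_k$ and of $S_k'$ as explicit functions of $k$, and then argue that these two collections are disjoint. Because $S_k$ and $S_k'$ are obtained from the generator $G_k$ by attaching short chains to its sides (including the two length-$3$ blocks $A$ and $B$), every non-trivial bipartition of $S_k$ is of one of two types: a ``local'' cluster of size $2$ or $3$ lying inside a single attached chain, or a ``global'' cluster consisting of the leaves hanging off one connected region of $G_k$. I would write down both types for $S_k$ and for $S_k'$, and observe that the modification relative to the construction of Figure~\ref{fig:TBR-SC-ex} is precisely that the leaves of the $A$- and $B$-blocks are distributed differently in the two trees, so that each small cluster of $S_k$ fails to be a cluster of $S_k'$ (and conversely), while the global clusters are pinned to incompatible locations on $G_k$ and hence also cannot coincide. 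Particular care is needed for the size-$2$ and size-$3$ clusters, since these are exactly the ones that survived in the Figure~\ref{fig:TBR-SC-ex} example and must now be shown to be broken; this is where the hypothesis $k\ge 4$ is used, to rule out small-$k$ degeneracies at the ends of the caterpillar-like structure.

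\textbf{Step 2 (subtree reduced).} This follows immediately from Step 1: if $Y\subseteq X_k$ with $|Y|\ge 2$ were the leaf set of a pendant subtree common to $S_k$ and $S_k'$, then $Y$ (and its induced topology) would be a non-trivial cluster of both trees, contradicting Step 1. Hence $S_k$ and $S_k'$ have no common pendant subtree of size at least two.

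\textbf{Step 3 (chain reduced).} Here cluster reducedness does not suffice, so I would argue separately. I would list the maximal chains of $S_k$ and, for each one, check the relative positions of its leaves in $S_k'$; a common $n$-chain must be a chain of both trees, and by construction the blocks $A$ and $B$ have length exactly $3$ and are flanked in such a way that appending any fourth leaf destroys the chain property in at least one of $S_k$, $S_k'$. Equivalently, one can phrase this via breakpoints: a common chain of length $\ge 4$ would have $0$ breakpoints, and inspection of the common structure of $S_k$ and $S_k'$ shows no such chain exists. Combining the three steps yields the lemma. \textbf{Main obstacle.} The genuine work is the bookkeeping in Step 1: producing a provably \emph{complete} list of the non-trivial clusters of both $S_k$ and $S_k'$ for general $k$ and verifying the two lists are disjoint, with the taxon indexing running over $15k-9$ labels and the $A/B$ blocks making the boundary cases delicate.
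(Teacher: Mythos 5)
Your skeleton is reasonable, and your Step 2 is a valid (and slightly slicker) observation: a common pendant subtree on a leaf set $Y$ with $|Y|\geq 2$ yields a common non-trivial cluster $Y$, so cluster reducedness does subsume subtree reducedness. The paper instead just checks subtree reducedness directly as "straightforward," and its chain argument is essentially your Step 3 made concrete: $S_k'$ has $k$ maximal $6$-chains, and neither these nor any of their subchains of length at least $4$ are chains of $S_k$.

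The genuine gap is in Step 1, which is precisely the step you flag as the "main obstacle" and then do not carry out. "Enumerate all non-trivial clusters of both trees as functions of $k$ and check the lists are disjoint" is a statement of the goal, not an argument, and the one sentence you offer for the hard subcase --- that the "global" clusters are "pinned to incompatible locations on $G_k$" --- does not work as stated: clusters are determined by edges of the \emph{trees} $S_k$ and $S_k'$, which are obtained by breaking the chains of $N_k$ at \emph{different} points in the two trees, so "location on $G_k$" is not an invariant you can compare across the two trees without further work. What is missing is the device that reduces the general-$k$ verification to a finite local check. The paper's proof does this by fixing the path $P$ in $S_k$ from leaf $15k-18$ to leaf $15k-17$ and splitting on whether the edge $e$ defining a candidate bipartition lies on $P$. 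If $e$ is off $P$, one side of the bipartition has at most $3$ leaves and is easily seen not to be a cluster of $S_k'$ (this is your "local" case, and it is fine). If $e$ is on $P$, the bipartition separates $15k-18$ from $15k-17$; in $S_k'$ there is exactly one edge $f$ whose removal separates these two leaves with both parts of size at least $2$, so only one candidate bipartition $Z_1,Z_2$ survives, and it is refuted by noting that $\{9,15\}\subset Z_1$ and $\{10,16\}\subset Z_2$ would require a single edge of $S_k$ to separate both $9$ from $10$ and $15$ from $16$, which is impossible. Without this (or an equivalent) pinning-down of the on-spine case to a single explicitly refutable bipartition, your Step 1 remains an unproved claim, and the lemma does not follow.
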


\begin{proof}
It is straightforward to check that $S_k$ and $S_k'$ are subtree reduced. To see that they are also chain-reduced, notice that $S_k'$ has $k$ maximal 6-chains. However, none of these chains (or a subchain of size at least 4) is also a chain of $S_k$. \blue{We complete the proof by showing that $S_k$ and $S_k'$ are also cluster reduced. Let $P$ be the path from  $15k-18$ to  $15k-17$ in $S_k$. Furthermore, let $e$ be an arbitrary edge of $S_k$, and let $Y_1,Y_2$ be the bipartition of $X_k$ such that one subtree obtained from $S_k$ by deleting $e$ has leaf set $Y_1$ while the other subtree has leaf set $Y_2$. It  suffices to show that, if $Y_1$ and $Y_2$ both have size at least two, then $Y_1,Y_2$ is not a bipartition of $S_k'$, i.e. there exists no edge in $S_k'$ whose removal results in two subtrees with leaf sets $Y_1$ and $Y_2$, respectively. First, suppose that $e$ is an edge that does not lie on $P$. Then for some $i\in\{1,2\}$, we have $|Y_i|\leq 3$. Moreover, if $|Y_i|\in\{2,3\}$, then it is easily verified that $Y_1,Y_2$ is not a bipartition of $S_k'$. Second, suppose, that $e$ lies on $P$. Then $|\{15k-18,15k-17\}\cap Y_1|=1$ and $|\{15k-18,15k-17\}\cap Y_2|=1$. Now,  let $f$ be an edge of $S_k'$ whose removal results in two subtrees that both have at least two leaves, the leaf set of one subtree contains $15k-18$ and the leaf set of the other subtree contains $15k-17$. Clearly, the unique edge \purple{$f$} in $S_k'$ that satisfies all three properties is the second edge on the path from $15k-18$ to  $15k-17$. Let $Z_1,Z_2$ be the bipartition of $X_k$ such that $Z_1$ and $Z_2$ are the leaf sets of the two trees resulting from the deletion of $f$ in $S_k'$. Without loss of generality, we may assume that $\{9,15\}\subset Z_1$ and $\{10,16\}\subset Z_2$. But $Z_1,Z_2$ is not a bipartition of $S_k$  because we can either separate 9 and 10, or 15 and 16 by removal of a single edge in $S_k$ but not both. It now follows that $S_k$ and $S_k'$ are cluster reduced. This completes the proof of the lemma.}\qed
\end{proof}

\begin{lemma}\label{l:TBR-scc-construction}
For $k\geq 4$, let $S_k$ and $S_k'$ be the two unrooted binary phylogenetic trees that are shown in Figure~\ref{fig:TBR-SCC-ex}, and let $G_k$ be the generator that is shown in the same figure. There exists an unrooted binary phylogenetic network $N_k$ with $r(N_k)=k$ that displays $S_k$ and $S_k'$ and whose underlying generator is $G_k$.
\end{lemma}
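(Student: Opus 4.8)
The plan is to mirror the constructive argument of Lemma~\ref{l:TBR-cs-construction}. Starting from the generator $G_k$ of Figure~\ref{fig:TBR-SCC-ex}, I would attach an explicitly prescribed set of taxa to each side of $G_k$, where the integer written next to a side in the figure records the number of leaves placed on that side and the left-to-right / top-to-bottom ordering convention fixed at the beginning of Section~\ref{sec:tbr-ex} determines which taxa land where. As in the earlier construction this takes the shape of a short ``seed'' attachment near the top of the generator, a loop over an index $i$ (running over roughly $k-2$ values) that on each pass decorates one horizontal side with a block of six consecutive taxa and the two neighbouring vertical sides with blocks of three taxa, and a couple of final attachments handling the bottom of $G_k$. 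Once this is written out, one checks that the resulting graph $N_k$ is a simple connected graph whose degree-one vertices are exactly the bijectively labelled leaves and whose remaining vertices all have degree three, and that every loop and every parallel class of $G_k$ receives at least one leaf; hence $N_k$ is an unrooted binary phylogenetic network, and by Observation~\ref{o:generator} its underlying generator is precisely $G_k$.

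The next step is to verify that $N_k$ displays both $S_k$ and $S_k'$. By construction $N_k$ contains $k$ chains of length six, each naturally viewed as two consecutive blocks $A,B$ of three leaves (as marked in the figure). Deleting, in each such $6$-chain, the single edge separating its $A$-block from its $B$-block, and then suppressing all resulting degree-two vertices, produces a subtree of $N_k$ that I claim equals $S_k$; leaving every $6$-chain intact and instead deleting the $k$ other edges of $N_k$ prescribed by Figure~\ref{fig:TBR-SCC-ex} (again followed by suppressing degree-two vertices) produces $S_k'$, which by the figure has $k$ maximal $6$-chains. Both claims are established by reading off the two resulting topologies edge by edge and comparing them with Figure~\ref{fig:TBR-SCC-ex}. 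Finally, since $|E_k|-|V_k|+1=k$ for the generator $G_k$ and attaching a set of leaves to a side (subdividing the side and adding pendant edges) leaves the quantity $|E|-(|V|-1)$ unchanged, we conclude $r(N_k)=k$, exactly as at the end of the proof of Lemma~\ref{l:TBR-cs-construction}.

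The only genuinely delicate point is choosing the leaf-attachment scheme so that one fixed family of $k$ deleted edges recovers $S_k$ while a second fixed family of $k$ deleted edges recovers $S_k'$; everything afterwards is routine but bookkeeping-heavy, and the claims about the underlying generator and the reticulation number are immediate. It would also be worth separately checking the boundary case $k=4$ (the generator $G_4$ drawn explicitly in the figure), since the loop range and the final attachments degenerate there and it is easy to be off by one.
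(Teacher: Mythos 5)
Your high-level strategy is exactly the paper's: attach explicitly prescribed blocks of taxa to the sides of $G_k$ (seed attachments, a loop over $i$, final attachments at the bottom), invoke Observation~\ref{o:generator} to get that $G_k$ underlies the resulting network, exhibit for each of $S_k$ and $S_k'$ a set of $k$ edges whose deletion (followed by suppression) yields that tree, and conclude $r(N_k)=k$ from $|E_k|-|V_k|+1=k$. However, the concrete details you commit to are wrong in ways that matter, and since the entire content of this lemma \emph{is} the explicit construction, they cannot be waved off as bookkeeping. First, your loop decorates each horizontal side with six taxa and the \emph{two} neighbouring vertical sides with three taxa each; in the actual construction each iteration attaches six taxa to a horizontal side, \emph{six} to a left-vertical side, and three to a right-vertical side (you appear to be recalling the pattern of Lemma~\ref{l:TBR-cs-construction}, where the split is $9+3+3$). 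Second, and more importantly, the network $N_k$ has $2k$ chains of length six, not $k$: every one of the $3k-3$ sides of $G_k$ is decorated, $2k$ of them with $6$-chains and $k-3$ with $3$-chains (giving $6\cdot 2k+3(k-3)=15k-9$ taxa). Both trees are then obtained by ``breaking'' complementary families of $k$ of these $2k$ $6$-chains between their $A$- and $B$-blocks: $S_k$ by breaking the $k-1$ horizontal ones plus one curved one, and $S_k'$ by breaking the $k-2$ left-vertical ones, the top-most right-vertical one, and the other curved one.

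Your description of $S_k'$ as arising by ``leaving every $6$-chain intact and instead deleting the $k$ other edges'' therefore does not work: since every side of $G_k$ carries a chain, there are no chain-free edges available to delete, and a tree displayed by $N_k$ in which all $2k$ (or even all $k$, on your count) $6$-chains of $N_k$ survive intact would have too many common $6$-chains with $N_k$ to be the tree $S_k'$ of Figure~\ref{fig:TBR-SCC-ex}, which has exactly $k$ maximal $6$-chains (the $k$ that are \emph{not} broken for $S_k'$). You correctly identify the delicate point --- one family of $k$ deleted edges must recover $S_k$ and another must recover $S_k'$ --- but the realization you sketch contradicts it, so the write-up would fail at precisely the step the lemma exists to supply.
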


\begin{proof}
The proof is similar to that of Lemma~\ref{l:TBR-cs-construction}.  Starting with $G_k$, obtain an unrooted binary phylogenetic network $N_k$ in the following way.
\begin{enumerate}
\item Attach $\{1,2,\ldots,6\}$ to the top-most horizontal side of $G_k$. 
\item Attach $\{7,8,\ldots,12\}$ to the top-most left-vertical side of $G_k$.
\item Attach $\{13,14,\ldots,18\}$ to the top-most right-vertical side of $G_k$.
\item For each $i\in\{2,\ldots,k-2\}$ in increasing order, perform the following three steps.
\begin{enumerate}
\item Attach $\{15i-11,15i-10,\ldots,15i-6\}$ to the top-most horizontal side of $G_k$ which is still undecorated.
\item Attach $\{15i-5,15i-4,\ldots,15i\}$ to the top-most left-vertical side of $G_k$ which is still undecorated.
\item Attach $\{15i+1,15i+2,15i+3\}$ to the top-most right-vertical side of $G_k$ which is still undecorated.
\end{enumerate}
\item  Attach $\{15(k-2)+4,15(k-2)+5,\ldots,15(k-2)+9\}$ to the lowest horizontal side of $G_k$.
\item Attach $\{15(k-2)+10,15(k-2)+11,\ldots,15(k-2)+15\}$ to the undecorated curved side of $G_k$ that connects the top-left vertex with the bottom-right vertex.
\item Attach $\{15(k-2)+16,15(k-2)+17,\ldots,15(k-2)+21\}$ to the undecorated curved side of $G_k$ that connects the top-right vertex with the bottom-left vertex.
\end{enumerate}
With $15(k-2)+21=15k-9$, it follows that $N_k$ has $15k-9$ leaves. Observe that $N_k$ has $2k$ chains of length 6. Regard each such chain as a sequence of two blocks where each block contains three leaves of the chain, which is indicated by $A$ and $B$ in Figure ~\ref{fig:TBR-SCC-ex}. In the following, we refer to the process of deleting the unique edge of a 6-chain that has one vertex in block $A$ and one in block $B$ as {\it breaking a chain}. Now, $S_k$ can be obtained from $N_k$ by breaking each of the $k-1$ 6-chains whose leaves decorate the $k-1$ horizontal sides of $G_k$, breaking the 6-chain whose leaves decorate the curved side of $G_k$ that connects the top-left vertex with the bottom-right vertex, and suppressing all resulting degree-2 vertices. Similarly, $S_k'$ can be obtained from $N_k$ by breaking each of the $k-2$ 6-chains whose leaves decorate the $k-2$ left-vertical sides of $G_k$, breaking the 6-chain whose leaves decorate the top-most right-vertical side of $G_k$, breaking the 6-chain whose leaves decorate the curved side of $G_k$ that connects the top-right vertex with the bottom-left vertex, and suppressing all resulting degree-2 vertices. Hence, $N_k$ displays $S_k$ and $S_k'$. Moreover, by construction and Observation~\ref{o:generator}, $G_k$ underlies $N_k$. We complete the proof by noting that, as $|E_k|-|V_k|+1=k$, it again follows by construction that $r(N_k)=k$, where $V_k$ and $E_k$ is the vertex and edge set of $G_k$, respectively.\qed
\end{proof}

\begin{lemma}\label{l:TBRscc=k}
For $k\geq 4$, let $S_k$ and $S_k'$ be the two unrooted binary phylogenetic trees that are shown in Figure~\ref{fig:TBR-SCC-ex}. Then $d_\TBR(S_k,S'_k)=k$.
\end{lemma}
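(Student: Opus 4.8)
The plan is to mirror the proof of Lemma~\ref{l:TBRsc=k} and establish the equality by proving the two inequalities $d_\TBR(S_k,S_k')\leq k$ and $d_\TBR(S_k,S_k')\geq k$ separately. For the upper bound I would simply invoke Lemma~\ref{l:TBR-scc-construction}, which supplies an unrooted binary phylogenetic network $N_k$ with $r(N_k)=k$ that displays both $S_k$ and $S_k'$, and combine it with Theorem~\ref{t:tbr-equiv} to get
$$d_\TBR(S_k,S_k')=h^u(S_k,S_k')\leq r(N_k)=k.$$

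For the lower bound I would use the maximum parsimony distance as a certificate, again following Lemma~\ref{l:TBRsc=k}. Consider the edge $\{u',v'\}$ of $S_k$ indicated in Figure~\ref{fig:TBR-SCC-ex}; deleting it induces a bipartition of $X_k$. Define the binary character $f\colon X_k\rightarrow\{0,1\}$ by giving one block of this bipartition state $0$ and the other state $1$. Then $l_f(S_k)=1$, since the single edge $\{u',v'\}$ separates the two state classes. The crux is to show $l_f(S_k')=k+1$, which I would verify by running Fitch's algorithm~\cite{fitch1971} on $S_k'$: rooting it arbitrarily and computing the Fitch sets bottom-up, one checks that the edge $\{u',v'\}$ is positioned so that, within each of the $k$ gadgets of $S_k'$ (the blocks $A$, $B$ on the vertical/curved sides together with the associated horizontal $6$-chains), the taxa are split between states $0$ and $1$ in a way that forces one additional state change per gadget, giving parsimony score $k+1$ in total. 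Granting this, the definition of $d_\MP$ yields
$$k=|l_f(S_k)-l_f(S_k')|=|1-(k+1)|\leq d_\MP(S_k,S_k')\leq d_\TBR(S_k,S_k'),$$
where the last inequality is the standard fact that $d_\MP$ is a lower bound on $d_\TBR$. Combining the two displayed chains of (in)equalities proves the lemma.

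The main obstacle is the careful bookkeeping in the Fitch computation on $S_k'$: one has to identify exactly which internal vertices receive the undecided Fitch state $\{0,1\}$ and argue that precisely $k$ of the set-merges are forced unions (equivalently, that the two state classes meet along $k$ internally disjoint ``bridges'' of $S_k'$, together with one further unavoidable change), and this has to be done against the precise labelling of Figure~\ref{fig:TBR-SCC-ex}. An alternative, possibly cleaner route would avoid Fitch entirely: exhibit an explicit extension of $f$ to the internal vertices of $S_k'$ realising $k+1$ changes (for $l_f(S_k')\leq k+1$), and prove $l_f(S_k')\geq k+1$ directly by a cut/packing argument on $S_k'$, e.g. by locating $k+1$ edge-disjoint paths each of which must carry a state change. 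Either way, the quantitative heart of the argument is showing that the chosen character separates the two trees by exactly $k$, which is what the gadget structure of Figure~\ref{fig:TBR-SCC-ex} was designed to guarantee.
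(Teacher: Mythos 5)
Your proposal is correct and follows essentially the same route as the paper: the upper bound via Lemma~\ref{l:TBR-scc-construction} and Theorem~\ref{t:tbr-equiv}, and the lower bound via the character induced by the bipartition at $\{u',v'\}$, with $l_f(S_k)=1$ and $l_f(S_k')=k+1$ computed by Fitch's algorithm, then $d_\MP\leq d_\TBR$. The paper simply notes that the argument of Lemma~\ref{l:TBRsc=k} carries over verbatim with $\{u',v'\}$ in place of $\{u,v\}$, which is exactly what you do.
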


\begin{proof}
By considering the unrooted binary phylogenetic network $N_k$ that is described in the proof of Lemma~\ref{l:TBR-scc-construction} (instead of the one described in the proof of Lemma~\ref{l:TBR-cs-construction}), the proof of this lemma can be established in exactly the same way as the proof of Lemma~\ref{l:TBRsc=k}. \blue{Note that the edge $\{u',v'\}$ as depicted in $S_k$ of Figure~\ref{fig:TBR-SCC-ex} plays the role of the edge $\{u,v\}$ that is used in the proof of Lemma~\ref{l:TBRsc=k}.}\qed
\end{proof}

We are now in a position to establish the main result of this section.

\begin{theorem}\label{t:TBR-tight}
Let $S$ and $S'$ be two unrooted binary phylogenetic trees on $X$. If $S$ and $S'$ are subtree and chain reduced and $d_\TBR(S,S')\geq 2$, then $|X|\leq 15d_\TBR(S,S')-9$ is a tight bound. Moreover, if $S$ and $S'$ are subtree, chain, and cluster reduced and $d_\TBR(S,S')\geq 4$, then $|X|\leq 15d_\TBR(S,S')-9$ is again a tight bound. 
\end{theorem}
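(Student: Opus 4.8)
The plan is to combine the upper bounds that are already in hand with the two explicit families of examples, so that almost all of the work has been done in the preceding lemmas. For the upper bound in the first part, Lemma~\ref{l:tbr-cs} already gives $|X|\le 15d_\TBR(S,S')-9$ whenever $S$ and $S'$ are subtree and chain reduced with $d_\TBR(S,S')\ge 2$. For the second part, the Remark immediately following Lemma~\ref{l:tbr-cs} shows that the same inequality continues to hold (in fact with slack $-2q$) when $S$ and $S'$ are additionally cluster reduced. Hence in both cases it only remains to exhibit, for each admissible value of $d_\TBR$, a pair of trees that attains the bound.

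For the first part I would invoke the family $\{(S_k,S_k')\}_{k\ge 2}$ of Figure~\ref{fig:TBR-SC-ex}. By construction these trees have leaf set $X_k$ with $|X_k|=15k-9$; the discussion immediately preceding Lemma~\ref{l:TBR-cs-construction} verifies that they contain no common pendant subtree of size at least two and no common $n$-chain with $n\ge 4$, so they are subtree and chain reduced; and Lemma~\ref{l:TBRsc=k} establishes $d_\TBR(S_k,S_k')=k$. Therefore, for every $k\ge 2$, the bound $|X|\le 15d_\TBR(S,S')-9$ is met with equality, which is exactly the assertion of tightness (combining this with the upper bound of Lemma~\ref{l:tbr-cs}).

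For the second part the argument is structurally identical but uses the family $\{(S_k,S_k')\}_{k\ge 4}$ of Figure~\ref{fig:TBR-SCC-ex}: these trees again satisfy $|X_k|=15k-9$, the lemma preceding Lemma~\ref{l:TBR-scc-construction} shows they are subtree, chain, \emph{and} cluster reduced, and Lemma~\ref{l:TBRscc=k} gives $d_\TBR(S_k,S_k')=k$. Hence the bound $|X|\le 15d_\TBR(S,S')-9$ (which holds in this setting by the Remark after Lemma~\ref{l:tbr-cs}) is attained for every $k\ge 4$, completing the proof.

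The only step with genuine mathematical content is the upper bound, i.e. Lemma~\ref{l:tbr-cs} together with its Remark; the remainder is bookkeeping over the two constructions. If there is a subtlety worth flagging, it is matching the ranges of $k$ to the hypotheses: the first family is defined and verified for all $k\ge 2$, whereas the cluster-reduced family of Figure~\ref{fig:TBR-SCC-ex} is only defined (and only cluster reduced) for $k\ge 4$, which is precisely why the theorem assumes $d_\TBR(S,S')\ge 2$ in the first case and $d_\TBR(S,S')\ge 4$ in the second. One should therefore be careful to state tightness only over exactly the range for which the corresponding family has been constructed and shown to be appropriately reduced.
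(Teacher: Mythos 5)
Your proposal is correct and follows essentially the same route as the paper: the upper bound is quoted from Lemma~\ref{l:tbr-cs} (noting it applies a fortiori in the cluster-reduced case), and tightness is witnessed by the families of Figures~\ref{fig:TBR-SC-ex} and~\ref{fig:TBR-SCC-ex} via Lemmas~\ref{l:TBRsc=k} and~\ref{l:TBRscc=k}, with the same care about the ranges $k\geq 2$ and $k\geq 4$.
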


\begin{proof}
Suppose that $S$ and $S'$ are subtree and chain reduced. It immediately follows from Lemma~\ref{l:tbr-cs} that $|X|\leq 15d_\TBR(S,S')-9$.  To establish that the bound is tight, it is \purple{sufficient to} choose $S$ and $S'$ such that they are subtree and chain reduced, and  $|X|=15d_{\TBR}(S,S')-9$. For  $k\geq 2$, set $S=S_k$ and $S'=S_k'$, where $S_k$ and $S_k'$ are the two trees shown in Figure~\ref{fig:TBR-SC-ex}. By construction, we have $|X|=15k-9$. Moreover, by Lemma~\ref{l:TBRsc=k}, we have $k=d_\TBR(S,S')$. The proof for when $S$ and $S'$ are subtree, chain, and cluster reduced can be established analogously by setting $S$ and $S'$ to be the two trees that are shown in Figure~\ref{fig:TBR-SCC-ex} for  $k\geq 4$ and considering Lemma~\ref{l:TBRscc=k} instead of Lemma~\ref{l:TBRsc=k}. \qed
\end{proof}

\blue{The next corollary is an immediate consequence of the last theorem.}

\begin{corollary}\label{c:TBR-kernel-tight}
The linear kernel as presented in Theorem~\ref{t:tbr-kernel} is tight.
\end{corollary}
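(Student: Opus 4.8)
\textbf{Proof proposal for Corollary~\ref{c:TBR-kernel-tight}.}
The plan is to observe that the corollary is essentially a restatement of the tightness half of Theorem~\ref{t:TBR-tight}, once we account for the (trivial) behaviour of the kernelization procedure on an input that is already reduced. Theorem~\ref{t:tbr-kernel} guarantees that, starting from any pair $T,T'$ with $d_\TBR(T,T')\geq 2$ and exhaustively applying the subtree and chain reductions, the resulting pair $S,S'$ on $X'$ satisfies $|X'|\leq 15d_\TBR(T,T')-9$. To show this linear kernel is tight it suffices to exhibit, for infinitely many values of the parameter, an input whose reduced instance meets this bound with equality.

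First I would take, for each $k\geq 2$, the pair $S_k,S_k'$ on $X_k$ from Figure~\ref{fig:TBR-SC-ex}, and set $T=S_k$, $T'=S_k'$. It was already verified (in the paragraph preceding Lemma~\ref{l:TBR-cs-construction}) that $S_k$ and $S_k'$ contain no common pendant subtree of size at least two and no common $n$-chain with $n\geq 4$; hence $S_k,S_k'$ are subtree and chain reduced. Consequently the kernelization procedure of Theorem~\ref{t:tbr-kernel} applied to $T,T'$ performs no reductions at all, so the reduced pair is $S=S_k$, $S'=S_k'$ on $X'=X_k$. By construction $|X'|=|X_k|=15k-9$, and by Lemma~\ref{l:TBRsc=k} we have $d_\TBR(T,T')=d_\TBR(S_k,S_k')=k$. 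Therefore $|X'|=15k-9=15d_\TBR(T,T')-9$, so the upper bound of Theorem~\ref{t:tbr-kernel} is attained for every $k\geq 2$, which is exactly the assertion that the kernel is tight. (One may equally appeal directly to Theorem~\ref{t:TBR-tight} and Theorem~\ref{t:safe}: tightness of the bound on reduced pairs plus safety of the reductions immediately yields tightness of the kernel.)

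There is no real obstacle here; the only point worth spelling out is the small logical step that, because $S_k,S_k'$ are \emph{already} subtree and chain reduced, running the kernelization on them leaves them unchanged, so the kernel produced has precisely $15k-9$ taxa. If one additionally wishes to record the cluster-reduced case, the identical argument applies verbatim using the family of Figure~\ref{fig:TBR-SCC-ex} for $k\geq 4$ together with Lemma~\ref{l:TBRscc=k}, showing that even augmenting the kernelization with the cluster reduction does not lower the worst-case bound below $15d_\TBR-9$.
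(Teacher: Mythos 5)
Your argument is correct and matches the paper's reasoning: the paper derives the corollary as an immediate consequence of Theorem~\ref{t:TBR-tight}, which in turn rests on the same observation you spell out, namely that the already-reduced pairs $S_k,S_k'$ of Figure~\ref{fig:TBR-SC-ex} attain $|X_k|=15k-9$ with $d_\TBR(S_k,S_k')=k$. Your extra remark that the kernelization leaves an already-reduced pair unchanged is exactly the small step the paper leaves implicit.
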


\section{Tight kernels for computing the rooted variant of the minimum hybridization problem}

In this section, we turn to rooted phylogenetic trees and networks, and show that a previously published kernelization result that is concerned with the rooted \polish{analogue} of UHN is also tight. Before formally stating the problem, we introduce some new definitions some of which \polish{are} the rooted versions of their counterparts introduced in Section~\ref{sec:prelim}.

A {\it rooted binary phylogenetic network} $N$ on $X$ is a rooted acyclic digraph \blue{with no edges in parallel and satisfying} the following properties:
\begin{enumerate}[(i)]
\item the (unique) root $\rho$ has out-degree two,
\item the set $X$ \polish{labels}
the set of vertices of out-degree zero, each of which has in-degree one \polish{(i.e. the leaves)}, and
\item all other vertices either have in-degree one and out-degree two, or in-degree two and out-degree one.
\end{enumerate}
For two vertices $u$ and $v$ in $N$, we say that $u$ is a {\it parent} of $v$ and $v$ is a {\it child} of $u$ if $(u,v)$ is an edge in $N$. For a leaf $\ell$, we denote its unique parent by $p_\ell$. Furthermore, a vertex of in-degree two and out-degree one is called a {\it reticulation}, and we use \polish{$r(N)$} to denote the number of reticulations in $N$. Lastly, $N$ is called a {\it rooted binary phylogenetic tree} on $X$ if $\polish{r(N)}=0$. 

Let $T$ be a rooted binary phylogenetic tree on $X$, and let $Y\subset X$. A subtree of $T$ is {\it pendant} in $T$ if it can be obtained from $T$ by deleting a single edge. Now, for $n\geq 2$, let $C=(\ell_1,\ell_2,\ldots,\ell_n)$ be a sequence of distinct leaves in $X$. We call $C$ an \purple{$n$-{\it chain}} of $T$ if $p_{\ell_1}=p_{\ell_2}$ or $p_{\ell_1}$ is a child of $p_{\ell_2}$ and, for all $i\in\{2,3,\ldots,n-1\}$, we have \polish{that} $p_{\ell_i}$ is a child of $p_{\ell_{i+1}}$. Furthermore, we say that $Y\subset X$ is a {\it non-trivial cluster} of $T$ if $|Y|\geq 2$ and there exists a vertex in $T$ whose set of descendants is precisely $Y$.

Now, let $N$ be a rooted binary phylogenetic network on $X$, and let $T$ be a rooted binary phylogenetic tree on $X$. We say that $T$ is {\it displayed} by $N$ if $T$  can be obtained from a subtree of $N$ by suppressing vertices with in-degree one and out-degree one. Furthermore, for two rooted binary phylogenetic trees $T$ and $T'$ on $X$, we set
$$h(T, T') = \min_ N\{r(N)\},$$ where the minimum is taken over all rooted binary phylogenetic networks on $X$ that display $T$ and $T'$. 
Historically, this number is referred to as the \polish{{\it hybridization number}} for $T$ and $T'$ (see e.g. \cite{bordewich2007computing,vanIersel20161075,ierselLinz2013,whidden2013fixed} and references therein).\\

We now formally state the rooted version of UHN.\\

\noindent{\sc Rooted-Hybridization-Number (RHN)}\\
\noindent{\bf Input.} Two rooted binary phylogenetic trees $T$ and $T'$ on $X$.\\
\noindent{\bf Output.} A rooted binary phylogenetic network $N$ on $X$ that displays $T$ and $T'$ and such that $r(N)=h(T,T')$.\\

\begin{figure}[!ht]
\center
\includegraphics[width=11.3cm]{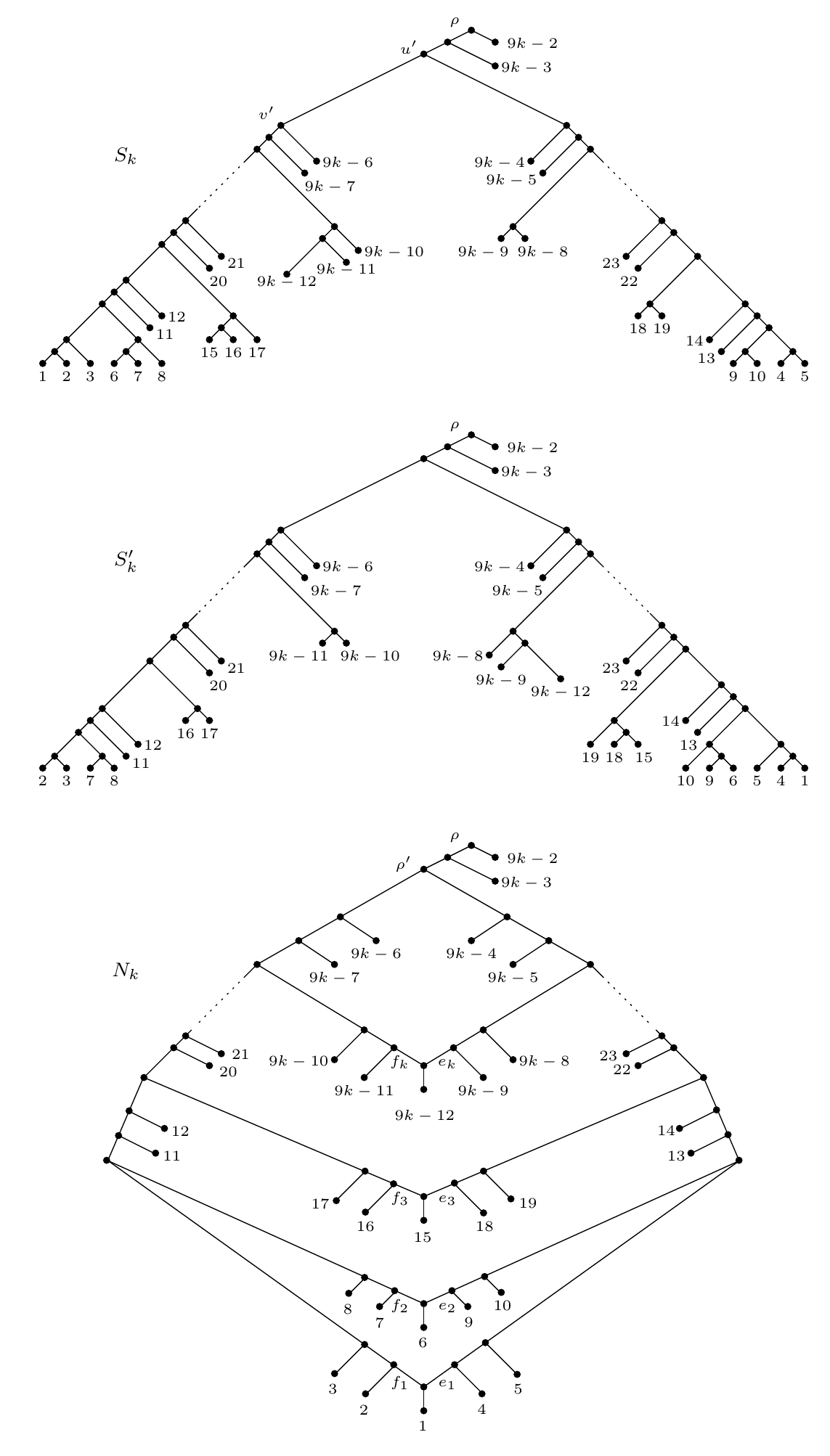}
\caption{The rooted binary phylogenetic trees $S_k$ and $S_k'$ as well as the rooted binary phylogenetic network that are used to show that the upper bound given in Lemma~\ref{l:r-hybrid} is tight for a pair of subtree and chain reduced trees. The edge \polish{$(u',v')$}
of $S_k$ is used to argue that $k= h(S_k,S_k')$ in the proof of Lemma~\ref{l:hybrid=k}.}
\label{fig:RHN-ex}
\end{figure}

Solving RHN for a pair of rooted binary phylogenetic trees is NP-hard~\cite{bordewich2007computing} and fixed-parameter tractable~\cite{sempbordfpt2007}, when parameterized by the \polish{hybridization number}. To establish a fixed-parameter tractability result, the authors of the latter paper used rooted variants of the subtree and chain reduction to kernelize the problem. Without detailing the 
reductions for two rooted binary phylogenetic trees, we next give  definitions of what it means for two rooted binary phylogenetic trees to be reduced under any of the two reductions. Specifically, for two rooted binary phylogenetic trees $S$ and $S'$ on $X$, we say that $S$ and $S'$ are {\it subtree reduced} if they do not have a common pendant subtree with at least two leaves and {\it chain reduced} if they do not have a common $n$-chain with $n\geq 3$. By building on the results of~\cite{sempbordfpt2007}, the authors of~\cite{approximationHN} improved the kernel size by using (rooted) generators. In the language of our paper, they established the following lemma.

\begin{lemma}\label{l:r-hybrid}\cite[Theorem 3.2]{approximationHN}
Let $S$ and $S'$ be two rooted binary phylogenetic trees on $X$ that are subtree and chain reduced
\polish{and $h(S, S') \geq 1$}. Then $|X|\leq 9h(S,S')-2$.
\end{lemma}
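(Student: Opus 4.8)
The plan is to run, in the rooted setting, the same argument that establishes Lemma~\ref{l:tbr-cs}. First I would fix a rooted binary phylogenetic network $N$ on $X$ that displays $S$ and $S'$ with $r(N)=h(S,S')=k$; such a network exists by the definition of $h$. Since $S$ and $S'$ are subtree reduced, one may take $N$ to have no pendant subtree with at least two leaves, so that $N$ arises from its underlying rooted $k$-generator $G$ by attaching a (possibly empty) chain of leaves to each side of $G$, in analogy with Observation~\ref{o:generator}. The first ingredient needed is a rooted analogue of Lemma~\ref{l:edges}: counting degrees in $G$ (the root has out-degree two, each reticulation has in-degree two and out-degree one, and every remaining internal vertex has in-degree one and out-degree two) together with $|E|-(|V|-1)=k$ shows that the number of sides of $G$ is linear in $k$; the unique $1$-generator is handled separately, which is why only $h(S,S')\ge 1$ is assumed.

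Second, I would prove a rooted analogue of Lemma~\ref{l:chain-length}. Let $D$ and $D'$ be embeddings of $S$ and $S'$ in $N$ and extend them greedily to spanning trees $B$ and $B'$ of $N$; as in the proof of Lemma~\ref{l:tbr-cs}, each of $B$ and $B'$ has exactly $|E|-k$ edges, and on the path of $N$ realising a side of $G$ each of them omits at most one edge. Hence the cut count $c_s\in\{0,1,2\}$ assigned to a side $s$ satisfies $c_s\ge b_C$, where $b_C$ is the number of breakpoints of the chain $C$ on $s$ relative to $S$ and $S'$, and $\sum_s c_s\le 2k$. Using that $S$ and $S'$ are chain reduced, i.e.\ share no chain of length at least three, a pigeonhole argument on the sub-chains into which the breakpoints split $C$ bounds the length $n$ of $C$ in terms of $c_s$ alone.

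Third, I would combine these ingredients exactly as in the proof of Lemma~\ref{l:tbr-cs}: if $q$ denotes the number of sides with cut count two, then there are $2(k-q)$ sides with cut count one and the remaining sides have cut count zero; substituting the per-side chain-length bounds together with the side count of $G$, and maximising the resulting affine expression in $q$ over $0\le q\le k$, yields $|X|\le 9k-2=9h(S,S')-2$. The boundary case $h(S,S')=1$ is checked directly.

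I expect the rooted chain-length lemma to be the main obstacle. Introducing an orientation changes which edge of a side-path a displayed tree may omit and how the ensuing cascade of degree-two (in-degree-one, out-degree-one) suppressions propagates into neighbouring sides, so the precise per-breakpoint length bounds --- and correspondingly the exact number of sides of a rooted $k$-generator that make the final arithmetic land on $9k-2$ rather than a weaker constant --- require a careful rooted case analysis rather than a verbatim transcription of Lemma~\ref{l:chain-length}. Once those constants are pinned down, the remaining counting is routine.
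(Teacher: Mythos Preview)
The paper does not supply its own proof of this lemma: the statement is quoted from \cite[Theorem~3.2]{approximationHN}, as the citation attached to the lemma header makes explicit, and the surrounding text says only that the authors of \cite{approximationHN} ``improved the kernel size by using (rooted) generators.'' When the present paper later derives Lemma~\ref{l:r-hybrid-scc} it again refers to ``reworking the proof of~\cite[Theorem~3.2]{approximationHN}'' rather than to any self-contained argument given here. So there is no in-paper proof to compare your proposal against.

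Your outline is nonetheless the right one, and it matches the method of the cited source---of which Lemma~\ref{l:tbr-cs} in this paper is itself the unrooted adaptation: pass to an optimal network displaying $S$ and $S'$, strip to its underlying rooted generator, count sides, bound the number of leaves attachable to each side via breakpoints (using that in the rooted setting common chains have length at most two rather than three), and sum. Your caveat that the rooted side-count and per-side constants must be re-derived rather than transcribed is well placed; the rooted generator carries a distinguished root side and the orientation changes both the degree count underlying the analogue of Lemma~\ref{l:edges} and the chain-length case analysis, and it is precisely this extra bookkeeping that produces $9k-2$ rather than the unrooted $15k-9$.
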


We next show, by \polish{describing a specific family of pairs of rooted binary phylogenetic trees}, that the bound given in Lemma~\ref{l:r-hybrid} is tight. For $k\geq 1$, consider the two rooted binary phylogenetic trees $S_k$ and $S_k'$ on $X_k$ that are shown in Figure~\ref{fig:RHN-ex}. By construction, we have $|X_k|=9k-2$; particularly $|X_1|=7$ and $|X_k|=7+9(k-1)$ for each $k\geq 2$. Moreover, it is easy to see that $S_k$ and $S_k'$ do not have any  common pendant subtree with at least two leaves or any common $n$-chain with $n\geq 3$. Thus, $S_k$ and $S_k'$ are subtree and chain reduced.

\begin{lemma}\label{l:hybrid=k}
For $k\geq 1$, let $S_k$ and $S_k'$, be the two rooted binary phylogenetic trees on $X_k$ that are shown in Figure~\ref{fig:RHN-ex}. Then $h(S_k,S_k')=k$.
\end{lemma}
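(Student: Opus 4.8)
The plan is to prove $h(S_k,S_k')\le k$ and $h(S_k,S_k')\ge k$ separately, mirroring the argument of Lemma~\ref{l:TBRsc=k}. For the upper bound I would use the rooted binary phylogenetic network $N_k$ depicted in Figure~\ref{fig:RHN-ex}, which has exactly $k$ reticulations, so that $r(N_k)=k$ is immediate from the construction and the definition of the reticulation number. I then need to check that $N_k$ displays both $S_k$ and $S_k'$. Since, in the rooted setting, a displayed tree is obtained by deleting exactly one of the two incoming edges at each reticulation and then suppressing the resulting in-degree-1/out-degree-1 vertices, this amounts to specifying, reticulation by reticulation, which incoming edge to delete so as to recover $S_k$, and likewise for $S_k'$ --- the rooted analogue of ``breaking a chain'' between the two length-3 blocks in Lemmas~\ref{l:TBR-cs-construction} and~\ref{l:TBR-scc-construction}. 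Having verified this, $h(S_k,S_k')\le r(N_k)=k$.

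For the lower bound I would again use a maximum-parsimony witness. Removing the edge $(u',v')$ of $S_k$ (ignoring directions) induces a bipartition of $X_k$; let $f\colon X_k\to\{0,1\}$ be the associated two-state character. Then $l_f(S_k)=1$ (the extension that is constant on each side of the bipartition has a single mutated edge, and $f$ is non-constant), while running Fitch's algorithm on $S_k'$ yields $l_f(S_k')=k+1$. Hence the unrooted trees underlying $S_k$ and $S_k'$ have parsimony distance at least $k$. To turn this into a bound on $h(S_k,S_k')$ I would observe that parsimony scores are independent of the position of the root, and that suppressing the root and forgetting all edge directions in any rooted binary phylogenetic network $N$ that displays $S_k$ and $S_k'$ produces an unrooted binary phylogenetic network of the same reticulation number that displays the unrooted versions of $S_k$ and $S_k'$. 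Combining this with the inequality $d_{\MP}\le d_{\TBR}$ and Theorem~\ref{t:tbr-equiv} gives $k\le d_{\MP}\le d_{\TBR}=h^u\le h(S_k,S_k')$, which completes the argument.

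The parts that are purely mechanical are the Fitch computation showing $l_f(S_k')=k+1$ and the identity $r(N_k)=k$. The step requiring the most care --- and the one I would single out as the main obstacle --- is confirming that $N_k$ displays both $S_k$ and $S_k'$: one must read off the correct switching at each of the $k$ reticulations from Figure~\ref{fig:RHN-ex} and check that the subsequent suppression of degree-2 (in the rooted sense, in-degree-1/out-degree-1) vertices really reproduces the two trees as drawn. As in the unrooted constructions, I expect this to be routine once the network and the two switchings are set out explicitly, but it is the place where the specific topology of the family actually enters the proof.
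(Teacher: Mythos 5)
Your proposal is correct and follows essentially the same route as the paper: the upper bound via the explicit network $N_k$ of Figure~\ref{fig:RHN-ex} with $r(N_k)=k$, and the lower bound by unrooting an optimal network, passing to $h^u=d_\TBR$ via Theorem~\ref{t:tbr-equiv}, and using the two-state character induced by the edge $(u',v')$ to get $d_\MP\geq k$. The only cosmetic difference is that the paper specifies the display of $S_k$ and $S_k'$ by naming the deleted edge sets $\{e_1,\ldots,e_k\}$ and $\{f_1,\ldots,f_k\}$ in the figure, which is exactly the reticulation-switching you describe.
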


\begin{proof}
Let $N_k$ be the rooted binary phylogenetic network that is shown in Figure~\ref{fig:RHN-ex}. Note that the leaf set of $N_k$ is $X_k$. Moreover, observe that $S_k$ can be obtained from $N_k$ by deleting the set $\{e_1,e_2,\ldots, e_k\}$ of edges and, subsequently, suppressing all resulting vertices with in-degree one and out-degree one. Similarly, $S_k'$ can be obtained from $N_k$ by deleting the set $\{f_1,f_2,\ldots, f_k\}$ of edges and, subsequently, suppressing all resulting vertices with in-degree one and out-degree one. Hence, $N_k$ displays $S_k$ and $S_k'$. Since $r(N_k)=k$, we have 
\begin{equation}\label{eq:three}
h(S_k,S_k')\leq r(N_k)=k.
\end{equation}
We complete the proof by showing that $k\leq h(S_k,S_k')$. Let $N_k^*$ be a rooted binary phylogenetic network on $X_k$ such that $r(N_k^*)=h(S_k,S_k')$. Furthermore, let $\bar{S}_k$ and $\bar{S}_k'$ be the two unrooted binary phylogenetic trees on $X_k$ that are  obtained from $S_k$ and $S_k'$, respectively, by suppressing the root $\rho$ and ignoring the directions on the edges. Then $\bar{S}_k$ and $\bar{S}_k'$ are displayed by the unrooted binary phylogenetic network $\bar{N}^*_k$ on $X_k$ that is obtained from $N_k$ by suppressing its root and ignoring the directions on the edges. As $r(N_k^*)=r(\bar{N}^*_k)$, it follows that $h^u(\bar{S}_k,\bar{S}_k')\leq h(S_k,S_k')$. To show that $k\leq h^u(\bar{S}_k,\bar{S}_k')$, we use the same approach as in the second half of the proof of Lemma~\ref{l:TBRsc=k}, where the edge $(u',v')$ as depicted in $S_k$ of Figure~\ref{fig:RHN-ex} plays the role of the edge $\{u,v\}$. Hence, we have $$k=|1-(k+1)|\leq d_\MP(\bar{S}_k,\bar{S}_k')\leq d_\TBR(\bar{S}_k,\bar{S}_k')= h^u(\bar{S}_k,\bar{S}_k')\leq h(S_k,S_k')$$ which, in combination with Equation~\ref{eq:three}, establishes the lemma. \qed
\end{proof}

The main result of this section is the following theorem whose proof can be established in the same way as the proof of Theorem~\ref{t:TBR-tight}.

\begin{theorem}\label{t:running-out-of-labels}
Let $S$ and $S'$ be two rooted binary phylogenetic trees on $X$ \polish{and $h(S, S') \geq 1$}. If $S$ and $S'$ are subtree and chain reduced, then $|X|\leq 9h(S,S')-2$ is a tight bound. 
\end{theorem}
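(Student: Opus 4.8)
The plan is to mirror the proof of Theorem~\ref{t:TBR-tight}: treat the inequality itself as a black-box consequence of the known kernelization bound, and then exhibit an explicit infinite family of tree pairs that attains it for every admissible value of the parameter.

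First I would dispose of the upper bound. Since $S$ and $S'$ are by hypothesis subtree and chain reduced and $h(S,S')\geq 1$, Lemma~\ref{l:r-hybrid} immediately yields $|X|\leq 9h(S,S')-2$; nothing further is needed for this direction.

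For tightness, I would fix an arbitrary $k\geq 1$ and set $S=S_k$ and $S'=S_k'$, the rooted binary phylogenetic trees of Figure~\ref{fig:RHN-ex}. As already observed immediately before Lemma~\ref{l:hybrid=k}, these trees share no common pendant subtree with at least two leaves and no common $n$-chain with $n\geq 3$, so they are subtree and chain reduced, and by construction $|X_k|=9k-2$. Lemma~\ref{l:hybrid=k} supplies $h(S_k,S_k')=k$. Hence $|X_k|=9k-2=9h(S_k,S_k')-2$, so the bound of Lemma~\ref{l:r-hybrid} is met with equality for every admissible $k$, which is exactly what it means for the bound to be tight. With Lemmas~\ref{l:r-hybrid} and~\ref{l:hybrid=k} in hand the theorem is a one-line assembly, just as Theorem~\ref{t:TBR-tight} was.

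The real content has already been pushed into Lemma~\ref{l:hybrid=k}, and that is where I expect the main difficulty to sit. The easy half there is $h(S_k,S_k')\leq k$, witnessed by the network $N_k$ of Figure~\ref{fig:RHN-ex} together with the edge sets $\{e_1,\ldots,e_k\}$ and $\{f_1,\ldots,f_k\}$ that recover $S_k$ and $S_k'$. The subtle half is the matching lower bound $h(S_k,S_k')\geq k$: because $h(\cdot,\cdot)$ is a quantity about \emph{rooted} networks, the maximum-parsimony lower bound cannot be invoked for it directly. The device is to unroot — any optimal rooted network for $S_k,S_k'$ becomes, after suppressing the root and forgetting edge directions, an unrooted network of the same reticulation number that displays the unrooted trees $\bar S_k,\bar S_k'$, whence $h^u(\bar S_k,\bar S_k')\leq h(S_k,S_k')$ — and then to combine Theorem~\ref{t:tbr-equiv}, the inequality $d_{\MP}\leq d_{\TBR}$, and a Fitch-algorithm computation of the parsimony score of the two-state character induced by the bipartition across the edge $(u',v')$ of $S_k$ to get $k\leq h^u(\bar S_k,\bar S_k')$, closing the loop.
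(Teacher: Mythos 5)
Your proposal is correct and follows essentially the same route as the paper: the authors likewise note that the theorem is proved exactly as Theorem~\ref{t:TBR-tight}, combining the upper bound of Lemma~\ref{l:r-hybrid} with the family $S_k, S_k'$ of Figure~\ref{fig:RHN-ex} and the equality $h(S_k,S_k')=k$ from Lemma~\ref{l:hybrid=k}. Your account of where the real work lies (the unrooting step and the parsimony-based lower bound inside Lemma~\ref{l:hybrid=k}) also matches the paper's argument.
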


\noindent Similar to UHN and Corollary~\ref{c:TBR-kernel-tight}, the last theorem implies that the upper bound on the size of a kernel for RHN, as established in~\cite[Theorem 3.2]{approximationHN}, is also tight.\\

We now turn to the rooted version of the cluster reduction~\cite{baroni2006hybrids}. Informally, this reduction breaks an instance, say $T$ and $T'$, of RHN into a number of smaller tree pairs such that the sum of the \polish{hybridization number} over all tree pairs equates to this number for $T$ and $T'$. In what follows, we say that two rooted binary phylogenetic trees are {\it cluster reduced} if they do not have any non-trivial cluster in common. Observe that the two phylogenetic trees $S_k$ and $S_k'$ as shown in Figure~\ref{fig:RHN-ex} have cluster $X_k-\{9k-3,9k-2\}$ in common. By deleting $9k-3$ and $9k-2$ from the trees and network of  Figure~\ref{fig:RHN-ex} and their respective parents, we obtain two rooted binary phylogenetic trees, say $R_k$ and $R_k'$, and a rooted binary phylogenetic network, say $M_k$.  Clearly, $R_k$ and $R_k'$  are subtree, chain, and cluster reduced. Furthermore, $M_k$ displays $R_k$ and $R_k'$ and $r(M_k)=r(N_k)=k$ for any $k\geq 1$. 
Reworking the proof of~\cite[Theorem 3.2]{approximationHN}, we note that the edge side incident with $\rho$ (this is a particular side of the generator used in that proof, corresponding to the path from $\rho$ to $\rho'$ as shown in Figure 3) cannot be decorated with any leaf. This is because any two distinct trees displayed by the network would then have a non-trivial common cluster. Hence, the counting argument from~\cite[Theorem 3.2]{approximationHN} goes through, with the exception that leaves $9k-3$ and $9k-2$ are no longer present in the network. This yields the following lemma: 

\begin{lemma}\label{l:r-hybrid-scc}
Let $S$ and $S'$ be two rooted binary phylogenetic trees on $X$ that are subtree, chain and cluster reduced \polish{and $h(S, S') \geq 1$}. Then $|X|\leq 9h(S,S')-4$.
\end{lemma}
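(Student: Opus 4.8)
The plan is to revisit the proof of Lemma~\ref{l:r-hybrid} (that is, \cite[Theorem 3.2]{approximationHN}) and isolate exactly where cluster-reducedness buys the extra saving of two. Recall the structure of that argument: one fixes a rooted binary phylogenetic network $N$ on $X$ that displays $S$ and $S'$ with $r(N)=h(S,S')=k$, passes to the rooted generator $G$ underlying $N$ (since $S$ and $S'$ are subtree reduced, all leaves of $X$ may be taken to lie on the sides of $G$), and bounds, side by side, the number of leaves of $X$ attached to each side of $G$; summing these bounds over the sides of $G$ --- whose number depends only on $k$ --- yields $|X|\le 9k-2$. The single side to treat separately is the \emph{root side}: the side of $G$ incident with $\rho$, namely the one lying on the path from $\rho$ to $\rho'$ in the notation of \cite{approximationHN}.

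First I would record why the root side already contributes little in the original argument: any chain of $N$ whose leaves sit on the root side is common to both $S$ and $S'$, because every subtree of $N$ witnessing a displayed rooted tree is rooted at $\rho$ and hence cannot break the root side --- it uses all of its edges. By chain-reducedness, such a common chain has length at most two; this is precisely the contribution of the near-root cherry $\{9k-3,9k-2\}$ in the extremal network $N_k$ of Figure~\ref{fig:RHN-ex}.

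The key new step is to upgrade ``at most two'' to ``none'' under cluster-reducedness. Suppose some leaf of $X$ were attached to the root side, and let $e$ be the edge of $N$ immediately below the lowest such leaf. Cutting $e$ partitions $X$ into the nonempty set of leaves on the root side above $e$ and the set $Y$ consisting of all leaves of $X$ in the reticulate part of $N$ (together with any further root-side leaves below $e$); since $k=r(N)\ge 1$ and $N$ is a valid network, one checks $|Y|\ge 2$. In any rooted tree $T$ displayed by $N$, the root side is used in full, so the leaves on the root side above $e$ are all ancestral in $T$ to the entire reticulate part; hence $T$ has a vertex whose set of descendant leaves is exactly $Y$, i.e. $Y$ is a non-trivial cluster of $T$. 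Taking $T=S$ and $T=S'$ shows that $Y$ is a non-trivial common cluster of $S$ and $S'$, contradicting cluster-reducedness. Therefore the root side receives no leaves, and rerunning the count of \cite[Theorem 3.2]{approximationHN} with the root side's contribution reduced from (at most) two to zero, and every other per-side bound unchanged, gives $|X|\le 9k-4=9h(S,S')-4$.

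The main obstacle is the cluster argument of the previous paragraph: one must verify that a single leaf on the root side forces a set $Y$ that is simultaneously non-trivial --- which comes down to showing that the reticulate part of a rooted generator with $k\ge 1$ reticulations always hosts at least two leaves once the network-validity (decoration) conditions are imposed --- and a cluster of \emph{every} tree displayed by $N$, which follows from the fact that a displayed rooted tree must traverse the whole root side. Everything else is a verbatim repetition of the counting argument in \cite[Theorem 3.2]{approximationHN} with one summand set to zero, so no genuinely new calculation is involved.
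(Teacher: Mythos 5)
Your proposal is correct and follows essentially the same route as the paper: the paper's proof likewise reworks the counting argument of \cite[Theorem 3.2]{approximationHN} after observing that the side of the generator incident with $\rho$ (the path from $\rho$ to $\rho'$) cannot carry any leaf, since otherwise the two displayed trees would share a non-trivial cluster, and this removes exactly the two leaves that side contributed to the $9k-2$ bound. Your write-up merely spells out in more detail which cluster arises and why it is non-trivial, which the paper leaves implicit.
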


Moreover, by repeating the argument to establish Theorem~\ref{t:running-out-of-labels} but using $R_k$, $R_k'$, and $M_k$ instead of $S_k$, $S_k'$, and $N_k$ as shown in Figure~\ref{fig:RHN-ex}, it  follows that the bound given in Lemma~\ref{l:r-hybrid-scc} is tight.

\section{Discussion}
Following the \blue{results} in this article, the algorithmic state of knowledge about computation of TBR distance can be summarized as follows. The problem is NP-hard, but permits a polynomial-time 3-approximation \cite{chen2015parameterized}, a branching FPT algorithm with running time $O( 3^k \cdot \text{poly}(n) )$ (where $k$ is the TBR distance and $n=|X|$) \cite{whidden2013fixed}, a kernel of size $15k-9$ and an exponential-time algorithm with running time  $O( 2.619^n \cdot \text{poly}(n))$ \cite{kelk2017note}.
An interesting consequence of the strengthened $15k-9$ bound is that results which indirectly use the size of the TBR kernel to compute bounds, automatically improve. One concrete example of this is the result in \cite{kelk2016reduction} which proves that after application of subtree and chain reduction rules, reduced instances of $d_\MP$ contain at most $\frac{4}{3} \cdot 28 \cdot d_\TBR$ taxa. The 28 now improves to 15. \blue{Our result implies a similar improvement on the kernel for computing the so-called subtree prune and regraft (SPR) distance~\cite{whidden2018calculating}.}

Returning to TBR distance, the natural question is whether through the introduction of new polynomial-time reduction rules (in addition to subtree, chain and cluster reduction rules) the size of the kernel can be further reduced, and if so, what the limit is of such an approach. This ties in with the FPT literature on (complexity-theoretic) lower bounds on kernel size (see e.g. \cite{bodlaender2014kernelization}), which have not yet been explored in the phylogenetics literature. Towards an easy lower bound we note that, if computation of TBR distance is APX-hard, then there exists a constant $c > 1 $ such that a polynomial-time $c$-approximation for TBR distance is not possible (assuming P $\neq$ NP). Such a result would exclude the existence of a kernel of size $c \cdot k$ for TBR distance (assuming P $\neq$ NP). This is because the TBR distance of two trees on $n$ taxa is at most $O(n)$ \cite{AllenSteel2001}; so simply returning a trivial solution for the kernelized instance would yield a $c$-approximation.\footnote{\blue{Note that the kernel for RHN is weighted~\cite{approximationHN} and, so, returning a trivial solution for a kernelized instance of RHN does not yield a $c$-approximation for this problem.}} However, although it is likely that computing the TBR distance is APX-hard, to the best of our knowledge the result has never been proven. This is an interesting hole to close in the literature.

Beyond TBR distance we can ask whether existing bounds on kernels for other phylogenetic distances and incongruency measures are tight and, if not, whether they can be improved. Many such kernelizations use subtree reductions and variations of chain reductions. The reductions typically have a common core but details differ from case to case depending on the specific combinatorial nature of the problem at hand: there are many subtle differences between TBR distance, SPR distance \cite{bonet2010complexity,bordewich2005computational,whidden2018calculating}, hybridization number \cite{sempbordfpt2007,van2018unrooted,ierselLinz2013} and agreement forests \cite{shi2018parameterized,whidden2013fixed}, for example. Other relevant factors include whether the input trees are unrooted or rooted; whether the input trees are binary or non-binary and the number of trees allowed in the input (see earlier references and \cite{vanIersel20161075}). In obtaining the tight $15k-9$ bound we were greatly helped by our ability to re-formulate the problem as a phylogenetic network construction problem, which in turn allowed us to make use of generators. Interestingly, the generators not only helped us improve the upper bound, they also gave strong hints concerning the topology of tight instances. Once discovered, we could use \blue{the maximum parsimony distance} to argue lower bounds on $d_\TBR$ distance. In how far do these three ingredients exist simultaneously for other problems and, where they do not exist, in which direction do we have to advance our knowledge to obtain tight bounds on kernel sizes? 


\vspace{2cm}
\noindent{\bf Acknowledgements.}
Simone Linz was supported by the New Zealand Marsden Fund.  Both authors would like to thank the Lorentz Center in the Netherlands for hosting the workshop \emph{Distinguishability in Genealogical Phylogenetic Networks}, where this work was initiated.

\bibliography{TBR-Kernel}{}

\begin{thebibliography}{10}

\bibitem{AllenSteel2001}
B.~Allen and M.~Steel.
\newblock Subtree transfer operations and their induced metrics on evolutionary
  trees.
\newblock {\em Ann. Comb.}, 5:1--15, 2001.

\bibitem{baroni2006hybrids}
M.~Baroni, C.~Semple, and M.~Steel.
\newblock Hybrids in real time.
\newblock {\em Syst. Biol.}, 55(1):46--56, 2006.

\bibitem{bodlaender2014kernelization}
H.~Bodlaender, B.~Jansen, and S.~Kratsch.
\newblock Kernelization lower bounds by cross-composition.
\newblock {\em SIAM J. Discrete Math.}, 28(1):277--305, 2014.

\bibitem{bonet2010complexity}
M.~Bonet and K.~St. John.
\newblock On the complexity of u{S}{P}{R} distance.
\newblock {\em IEEE/ACM Trans. Comput. Biol. Bioinform.}, 7(3):572--576, 2010.

\bibitem{bordewich2017fixed}
M.~Bordewich, C.~Scornavacca, N.~Tokac, and M.~Weller.
\newblock On the fixed parameter tractability of agreement-based phylogenetic
  distances.
\newblock {\em J. Math. Biol.}, 74(1-2):239--257, 2017.

\bibitem{bordewich2005computational}
M.~Bordewich and C.~Semple.
\newblock On the computational complexity of the rooted subtree prune and
  regraft distance.
\newblock {\em Ann. Comb.}, 8(4):409--423, 2005.

\bibitem{sempbordfpt2007}
M.~Bordewich and C.~Semple.
\newblock Computing the hybridization number of two phylogenetic trees is
  fixed-parameter tractable.
\newblock {\em IEEE/ACM Trans. Comput. Biol. Bioinform.}, 4(3):458--466, 2007.

\bibitem{bordewich2007computing}
M.~Bordewich and C.~Semple.
\newblock Computing the minimum number of hybridization events for a consistent
  evolutionary history.
\newblock {\em Discrete Appl. Math.}, 155(8):914--928, 2007.

\bibitem{chen2015parameterized}
J.~Chen, J-H. Fan, and S-H. Sze.
\newblock Parameterized and approximation algorithms for maximum agreement
  forest in multifurcating trees.
\newblock {\em Theor. Comput. Sci.}, 562:496--512, 2015.

\bibitem{Cygan:2015:PA:2815661}
M.~Cygan, F.~Fomin, L.~Kowalik, D.~Lokshtanov, D.~Marx, M.~Pilipczuk,
  M.~Pilipczuk, and S.~Saurabh.
\newblock {\em Parameterized Algorithms}.
\newblock Springer Publishing Company, Incorporated, 1st edition, 2015.

\bibitem{felsenstein2004inferring}
J.~Felsenstein.
\newblock {\em Inferring Phylogenies}.
\newblock Sinauer Associates, Incorporated, 2004.

\bibitem{fischer2014}
M.~Fischer and S.~Kelk.
\newblock On the {M}aximum {P}arsimony distance between phylogenetic trees.
\newblock {\em Ann. Comb.}, 20(1):87--113, 2016.

\bibitem{fitch1971}
W.~M. Fitch.
\newblock Toward defining the course of evolution: minimum change for a
  specific tree topology.
\newblock {\em Syst. Biol.}, 20(4):406--416, 1971.

\bibitem{GBP2012}
P.~Gambette, V.~Berry, and C.~Paul.
\newblock Quartets and unrooted phylogenetic networks.
\newblock {\em J. Bioinf. Comput. Biol.}, 10(4):1250004, 2012.

\bibitem{hein1996complexity}
J.~Hein, T.~Jiang, L.~Wang, and K.~Zhang.
\newblock On the complexity of comparing evolutionary trees.
\newblock {\em Discrete Appl. Math.}, 71(1-3):153--169, 1996.

\bibitem{HusonRuppScornavacca10}
D.~Huson, R.~Rupp, and C.~Scornavacca.
\newblock {\em Phylogenetic Networks: Concepts, Algorithms and Applications}.
\newblock Cambridge University Press, 2011.

\bibitem{john2017shape}
K.~St John.
\newblock The shape of phylogenetic treespace.
\newblock {\em Syst. Biol.}, 66(1):e83, 2017.

\bibitem{kelk2016reduction}
S.~Kelk, M.~Fischer, V.~Moulton, and T.~Wu.
\newblock Reduction rules for the maximum parsimony distance on phylogenetic
  trees.
\newblock {\em Theor. Comput. Sci.}, 646:1--15, 2016.

\bibitem{kelk2014constructing}
S.~Kelk and C.~Scornavacca.
\newblock Constructing minimal phylogenetic networks from softwired clusters is
  fixed parameter tractable.
\newblock {\em Algorithmica}, 68(4):886--915, 2014.

\bibitem{kelk2017note}
S.~Kelk and G.~Stamoulis.
\newblock A note on convex characters, fibonacci numbers and exponential-time
  algorithms.
\newblock {\em Adv. in Appl. Math.}, 84:34--46, 2017.

\bibitem{approximationHN}
S.~Kelk, L.~van Iersel, S.~Linz, N.~Leki{\'c}, C.~Scornavacca, and L.~Stougie.
\newblock Cycle killer... {Q}u'est-ce que c'est? {O}n the comparative
  approximability of hybridization number and directed feedback vertex set.
\newblock {\em SIAM J. Discrete Math.}, 26(4):1635--1656, 2012.

\bibitem{money2012characterizing}
D.~Money and S.~Whelan.
\newblock Characterizing the phylogenetic tree-search problem.
\newblock {\em Syst. Biol.}, 61(2):228, 2012.

\bibitem{moulton2015}
V.~Moulton and T.~Wu.
\newblock A parsimony-based metric for phylogenetic trees.
\newblock {\em Adv. in Appl. Math.}, 66:22--45, 2015.

\bibitem{shi2018parameterized}
F.~Shi, J.~Chen, Q.~Feng, and J.~Wang.
\newblock A parameterized algorithm for the maximum agreement forest problem on
  multiple rooted multifurcating trees.
\newblock {\em J. Comput. System. Sci.}, 97:28--44, 2018.

\bibitem{steel2016phylogeny}
M.~Steel.
\newblock {\em Phylogeny: Discrete and Random Processes in Evolution}.
\newblock SIAM, 2016.

\bibitem{lev2TCBB}
L.~van Iersel, J.~Keijsper, S.~Kelk, L.~Stougie, F.~Hagen, and T.~Boekhout.
\newblock Constructing level-2 phylogenetic networks from triplets.
\newblock {\em IEEE/ACM Trans. Comput. Biol. Bioinform.}, 6(4):667--681, 2009.

\bibitem{van2016hybridization}
L.~van Iersel, S.~Kelk, N.~Lekic, C.~Whidden, and N.~Zeh.
\newblock Hybridization number on three rooted binary trees is {E}{P}{T}.
\newblock {\em SIAM J. Discrete Math.}, 30(3):1607--1631, 2016.

\bibitem{vanIersel20161075}
L.~van Iersel, S.~Kelk, and C.~Scornavacca.
\newblock Kernelizations for the hybridization number problem on multiple
  nonbinary trees.
\newblock {\em J. Comput. System Sci.}, 82(6):1075 -- 1089, 2016.

\bibitem{van2018unrooted}
L.~van Iersel, S.~Kelk, G.~Stamoulis, L.~Stougie, and O.~Boes.
\newblock On unrooted and root-uncertain variants of several well-known
  phylogenetic network problems.
\newblock {\em Algorithmica}, 80(11):2993--3022, 2018.

\bibitem{ierselLinz2013}
L.~van Iersel and S.~Linz.
\newblock A quadratic kernel for computing the hybridization number of multiple
  trees.
\newblock {\em Inform. Process. Lett.}, 113(9):318 -- 323, 2013.

\bibitem{whidden2013fixed}
C.~Whidden, R.~G. Beiko, and N.~Zeh.
\newblock Fixed-parameter algorithms for maximum agreement forests.
\newblock {\em SIAM J. Comput.}, 42(4):1431--1466, 2013.

\bibitem{whidden2018calculating}
C.~Whidden and F.~Matsen.
\newblock Calculating the unrooted subtree prune-and-regraft distance.
\newblock {\em IEEE/ACM Trans. Comput. Biol. Bioinform.}, 2018.

\end{thebibliography}
\bibliographystyle{plain}

\end{document}